\def\G{\Gamma}
\def\del{\partial}
\def\SS{\mathbb S}
\def\Z{\mathbb Z}
\def\T{\mathbb T}
\def\R{\mathbb R}
\def\C{\mathbb C}
\def\TTheta{{\mathbb T}_{\Theta}}
\def\B{\mathscr B}
\def\BG{\bar \Gamma}
\def\nn{\nonumber}
\def\t{\tau}
\def\A{\mathscr{A}}
\def\H{\mathscr{H}}
\def\stab{St}
\def\wt{wt}
\def\group{Sym(\BG)}
\def\hrep{\rho}
\def\mrep{Mat}
\def\urep{\nu}
\def\rep{(\H,U)}
\def\symrep{\alpha}
\def\bt{\boldsymbol{\tau}}
\def\coh{\alpha}
\def\Gpd{\mathcal G}
\def\P{\mathcal P}
\def\Aut{\mathcal L}
\def\Atot{\tilde A}
\def\Btot{\tilde B}
\newtheorem{thm}{Theorem}[section]
\newtheorem{lem}[thm]{Lemma}
\newtheorem{prop}[thm]{Proposition}
\newtheorem{cor}[thm]{Corollary}
\theoremstyle{definition}
\newtheorem{rmk}[thm]{Remark}
\newtheorem{ex}[thm]{Example}
\newcommand{\cev}[1]{\stackrel{\leftarrow}{#1}}
\renewcommand{\vec}[1]{\stackrel{\rightarrow}{#1}}
\accentedsymbol{\dbarG}{\Bar{\Bar{\Gamma}}}
\begin{document}

\title[Re-gauging groupoid, symmetries and degeneracies]{
 Re-gauging groupoid, symmetries and degeneracies for Graph Hamiltonians
and applications
to the Gyroid wire network}

\author
[Ralph M.\ Kaufmann]{Ralph M.\ Kaufmann}
\email{rkaufman@math.purdue.edu}

\address{Department of Mathematics, Purdue University,
 West Lafayette, IN 47907 and Institute for Advanced Study, Princeton NJ, 08540}

\author
[Sergei Khlebnikov]{Sergei Khlebnikov}
\email{skhleb@physics.purdue.edu}

\address{Department of  Physics, Purdue University,
 West Lafayette, IN 47907 and Institute for Advanced Study, Princeton NJ, 08540}

\author
[Birgit Kaufmann]{Birgit Wehefritz--Kaufmann}
\email{ebkaufma@math.purdue.edu}

\address{Department of Mathematics and Department of Physics, Purdue University}

\begin{abstract}
We study a class of graph Hamiltonians given by a type of quiver representation to which we can associate (non)--commutative geometries.   
By selecting gauging data these geometries are realized by matrices through an explicit construction or a Kan-extension.
We describe the changes in gauge via the action of a regauging groupoid.
It acts via matrices that give rise to a noncommutative 2--cocycle and hence
to a groupoid extension (gerbe).

We furthermore show that automorphisms of the underlying graph of the quiver can be lifted to extended symmetry groups
of regaugings.
In the commutative case, we deduce that the extended symmetries act
via a projective representation. This  yields isotypical decompositions and super--selection rules.

We  apply these results to the PDG and honeycomb wire--networks  using representation theory for projective groups and
show that all the degeneracies in the spectra are consequences 
of these enhanced 
symmetries. This includes the Dirac points of the G(yroid) and the honeycomb systems.

\end{abstract}

\maketitle

%\tableofcontents

\section{Introduction}

We study a class of graph Hamiltonians given by a type of quiver representation to which we can associate (non)--commutative geometries.   
Our particular focus are symmetries in these geometric realizations especially those coming from the symmetries of the graph. Via considering a re--gauge group(oid) action we can show that the classical graph symmetries lead to enhanced (centrally extended) symmetries which are realized as projective representations in the commutative case.

The physical motivation for considering such systems stems from considering wire--systems on the nano--scale where the presence of higher dimensional irreps in the decomposition of the above symmetries leads to degeneracies in the spectrum. After giving the general arguments we apply them to the PDG and the honeycomb wire systems. Here we are especially motivated by understanding the electronic properties of a novel material \cite{Hillhouse} based on the G(yroid) geometry \cite{kkwk}.
We expect that our considerations can also be applied to other graph based setups, such as those coming from quiver representations, e.g.\ in field theory, 
or the coordinate changes in cluster algebras and varieties.

Mathematically, the initial data we start from is a finite graph $\BG$ together with a separable Hilbert space  $\H_v$ for each vertex $v$ of the graph and a unitary morphism for each oriented edge, such that the inverse oriented edge corresponds to the inverse morphism.
Algebraically this data corresponds to a groupoid representation in separable Hilbert spaces, as we explain in \S\ref{groupreppar}. 
In this situation, as we derive, there is an associated Hamiltonian acting on the direct sum of all the Hilbert spaces $\H_v$.
 
In order to obtain a matrix representation of the Hamiltonian one has to fix some additional gauge data.
The gauge data consists of a rooted spanning tree and an order on the vertices.
With this choice in place, each edge corresponds to a loop and
 we can represent an isometry associated to an edge by an element of the $C^*$--algebra $\A$  generated by the morphisms corresponding to the loops of $\BG$ at a fixed base point, cf.\  \S\ref{pipar}.  Via pull--back this also yields a matrix representation of the Hamiltonian  in $M_k(\A)$ where $k$ is the number of vertices of $\BG$.

From the noncommutative geometry point of view the $C^*$--algebra $\A$ represents a space. If $\A$ is commutative (since $\A$ is unital)
this space can be identified as a compact Hausdorff space $X$ such that the $C^*$--algebra of complex valued continuous functions $C^*(X)$ is isomorphic to $\A$. In the applications we consider $X$ is the momentum space, which  in the commutative case is the n--dimensional torus
 $X=T^n=(S^1)^{\times n}$ 
and in the noncommutative case $\A$ is the noncommutative $n$--torus $\TTheta^n$ for a fixed value of $\Theta$, that
in physical situations is given by a background  B--field. See below \S\ref{wiregeosec}.
 
Concrete extended symmetry groups are constructed via a lift of the action of the underlying graph symmetries $\group$ on this data as re--gaugings.
The lift of the classical symmetries is rather complicated and proceeds in several steps:

(1) We first establish that the different matrix realizations
of the Hamiltonian given by choosing
different rooted spanning trees and orders are all linked by gauge--transformations---see Theorem \ref{regaugethm}.
The specific gauge transformations that arise form the re--gauging groupoid $\Gpd$. It
acts transitively on the set of all the matrix Hamiltonians that can be obtained from the decorated graph by all different choices of data.
Using category theory these realizations are just Kan--extensions given by pushing forward to the graph obtained by contracting the spanning tree.

(2) We then show that the gauge transformations can be represented as conjugation 
with matrices with coefficients in $\A$. We prove  that these matrices lead to a noncommutative 2--cocycle.
This in turn gives rise to a groupoid extension of $\Gpd$. Geometrically this corresponds to a gerbe.

(3)  In the commutative case (\S\ref{compar}), we furthermore show that these matrices
give a {\em projective} representation of the re--gauging groupoid. 
Just like in ordinary theory
of projective representations this means that there is a {\it bona fide} representation 
of a central extension of this groupoid.

(4) In the commutative setup, if we fix a point $p\in X$ and evaluate
the matrix Hamiltonian with coefficients in $\A$ at $p$ we obtain a matrix Hamiltonian with coefficients in $\C$ that we denote by $H(p)\in M_k(\C)$. In this way we can think of $X$ as the base of a family of finite dimensional Hamiltonians. Likewise, the re--gauging actions give a groupoid representation in matrices $M_k(\C)$
which commute with $H(p)$.
The stabilizer groups of a particular fixed Hamiltonian are the sought after enhanced symmetry groups. 

(5) For  applications this leaves the problem of identifying the points $p$ and the stabilizer
groups or at least subgroups. To address the latter question, we establish  that the automorphism group  $Sym(\BG)$
of the graph  induces re--gaugings, by pushing forward the spanning tree and the order of the vertices.
In this way, the  symmetries of the graph give rise 
to a sub--groupoid of $\Gpd$. 
Going through the construction
outlined above, we can restrict to this sub--groupoid  and see that at a fixed point 
of the re--gauging action we get a projective representation of the stabilizer subgroup which leads to possible
higher dimensional irreps and thus band sticking.

(6) In order to identify points of $X$ ---which we take to be $T^n$ for concreteness--- where such enhanced symmetry groups can occur, we show that under certain assumptions, that hold in all cases of our initial physical interest, the operations of the  symmetry group of the graph $Sym(\BG)$ via re--gaugings lift to an action on the base torus $T^n$ (Theorems \ref{autothm} and \ref{nondegthm})\footnote{We say lift to the base here, since any action on the base gives rise to an action on the parameterized family of Hamiltonians, but it is not clear that any such action comes from one on the base.}.
At points  $t\in T^n$ with non--trivial stabilizer groups, we automatically get a projective representation of these stabilizer subgroups of the automorphism groups of the underlying graph, which commutes with the Hamiltonian. Hence
we get isotypical decompositions, which can give us non--trivial information about the spectrum using
the arguments  above.

We wish to point out that this approach is broader than that of considering classical symmetries of decorated graphs and
in the commutative case generalizes the extensive  analysis of \cite{Avron}, see \S\ref{classympar} for details.

We apply all these considerations to the cases of the PDG wire networks and the honeycomb lattice; see \S\ref{calcpar}. Here 
 the graph $\BG$ arises physically as the quotient
graph  of a given (skeletal) graph $\G\subset \R^n$ by a maximal translation group $L\simeq \Z^n$. Each edge of the quotient graph is decorated with a partial isometry operator of translation in the direction of any lift of that edge to the graph $\G$.
The Harper Hamiltonian is constructed from these isometries; in the simplest version
(tight-binding approximation), it is simply the summation over them.
This Hamiltonian along with the symmetries of the given material are the main input into the noncommutative geometry machine, which constructs a $C^*$--algebra that encodes relevant information about the system out of this data.

One main objective is  to analyze and understand
the branching behavior or  stated otherwise the locus of degenerate Eigenvalues.
The motivation is that in solid-state physics such degenerate Eigenvalues may lead to 
novel electronic properties, as is the case, for instance, with the Dirac points in graphene \cite{CastroNeto}.
The key observations are that (a) non--Abelian extended
symmetry groups  by themselves can force degeneracies via higher dimensional --i.e. $>1$-- irreducible representations
and (b) any enhanced symmetries, also Abelian ones, give rise to super--selection rules. The latter ones
can facilitate finding the spectrum considerably, since the Hamiltonian Eigenspace decomposition
has to be compatible with the isotypical decomposition of the representation.

Complementary to this group theoretic approach there is another
one  via singularity theory,
which is contained in \cite{kkwkdirac}. 
 Our main result for the PDG and honeycomb networks is that both approaches yield the same classification of degeneracies in the commutative case. 
 Namely, at all degenerate points, which were analytically classified in \cite{kkwkdirac},
there is an enhanced symmetry group arising from graph automorphisms in the above way which forces the degeneracy. Here the surprising
fact is that we find {\em all} the degeneracies and degenerate points through the projective representations of (subgroups of) $\group$ given by re--gaugings.

Here the G--wire network corresponding to the double Gyroid which was our original motivator is the most interesting case. As shown in \cite{kkwkdirac} there are exactly two points with triple degeneracies
and two points with degeneracy $(2,2)$, that is two doubly degenerate Eigenvalues.
The automorphism group of the graph is $\SS_4$. The representation theory becomes
very pretty in this case. There are two fixed points $(0,0,0)$ and $(\pi,\pi,\pi)$ 
on $T^3$ under the whole $\SS_4$ action.
The projective representation is just the ordinary representation of $\SS_4$ by $4\times 4$ permutation matrices at the point $(0,0,0)$ which is known to decompose into the trivial and a 3--dimension irreducible representation which forces the triple degeneracy. This result was also found by \cite{Avron}, where an initially  different system
was considered that results in the same spectrum.

 At the point $(\pi,\pi,\pi)$
things become slightly more interesting. There is a projective representation of $\SS_4$,
but we can show that this projective representation corresponds 
to an extension which is isomorphic
to the trivial extension. Hence after applying the 
isomorphisms we  have a representation of $\SS_4$ and  
there is a trivial and a 3-dim irrep, giving the second triple degeneracy. 
The elucidates the origin of the symmetry stated in  \cite{Avron}.
Notice that the classical symmetries of decorated graphs would only yield an $\SS_3$ symmetry 
at this point, which cannot
explain a triple degeneracy as there are no 3-dim irreps for this group.

Things really get interesting at the 
two points $(\pm \pi/2,\pm\pi/2,\pm \pi/2)$. Here the
stabilizer group is $A_4$. The projective representation gives 
rise to an extension which we
show to be isomorphic to the non--trivial double 
cover $2A_4$ of $A_4$ aka.\ 2T, the binary tetrahedral group or $SL(2,3)$. Using the 
character table we deduce that the representation decomposes into two 2--dim 
irreps forcing the two double degeneracies.

These are completely novel results. We wish to point
out that one absolutely needs the double cover as $A_4$ 
itself has no 2-dim irreps and hence the projective extension is essential.

We also use the fact that the diagonal of $T^3$ is fixed by a 
cyclic subgroup $C_3$ of $A_4$ in
order to determine the spectrum analytically. 
Here we use the super--selection rules. 

For the D and honeycomb case, we show that the 
degenerate points which are well known in the honeycomb case and were 
computed for D in \cite{kkwk2} are all detected by enhanced symmetries. 
These however yield Abelian representations and hence we have to use the arguments 
of the type (b), that is super--selection rules,
to show that the Eigenvalues are degenerate over these points. Similar results to ours have now also been independently found for the D case in \cite{Freed} using different methods.

\section{General setup}
In this section, we show how to construct the $C^*$--algebra $\A$ and the Hamiltonian $H\in M_k(\A)$ mentioned
in the introduction from a graph representation of a finite graph $\BG$ with $k$ vertices. Furthermore we  embed a copy $\A$ into  $M_k(\A)$  and define $\B$ to be the subalgebra generated by $H$  and $\A$ under this embedding. The pair $\A\hookrightarrow \B$ is the basic datum for our noncommutative geometry.

\subsection{Groupoid graph representations in separable Hilbert spaces}
\label{groupreppar} Given a finite graph 
$\BG$ we define a groupoid representation of $\BG$, an association of a separable Hilbert space $\H_v$ for each $v\in V(\BG)$ and an isometry $U_{\vec{e}}:\H_v\to \H_w$ 
for each directed edge $\vec{e}$ from $v$ to $w$. 

This data indeed determines a unique 
functor $\rep$ from  the path groupoid of $\BG$ to the category of separable Hilbert spaces.
The path groupoid ${\mathcal P}_{\BG}$ (or $\P$ for short) of $\BG$ is category whose objects are the vertices
of $\BG$ and whose morphisms are generated by the oriented edges, where the inverse
of a morphism given by $\vec{e}$ is the one given by $\cev{e}$.  
Notice, that we are  looking at the morphisms generated by the oriented edges,
this means that $Hom_{\mathcal P}(v,w)$ is the set of paths along oriented edges from $v$ to $w$ modulo
the relation that going back and forth along an edge is the identity.
Composition is only allowed if the first
oriented edge terminates at the beginning of the second oriented edge. This is why we only obtain a groupoid and not a group.

In particular, this lets one view the  fundamental group $\pi_1(\BG,v_0)$ in two equivalent fashions. First as
the topological $\pi_1$ of the realization of a graph, and secondly as 
the group $Hom_{\mathcal P}(v_0,v_0)=Aut(v_0)$ where $Hom_{\mathcal P}$ are the morphisms in the path groupoid ${\mathcal P}$. 

The collection of automorphisms in $\P$ forms a subgroupoid $\Aut$. It is the disjoint union $\Aut=\amalg_{v\in V(\BG)} \pi_1(\BG,v)$.
These are the classes of free loops on $\BG$.

\subsubsection{Hamiltonian, symmetries and the  $C^*$--geometries}

Given a groupoid representation as above, set $\H=\bigoplus_{v\in V(\BG)}\H_v$
and define $H$ by 
\begin{equation}
\label{haperhameq}
H=\sum_{e\in E(\BG)} \left( U_{\vec{e}}+U_{\stackrel{\leftarrow}{e}} \right) \in B(\H)
\end{equation}
where $B(\H)$ is the $C^*$ algebra of bounded operators on $\H$.

We let $\A$ be the {\em abstract} $C^*$--algebra $\pi_1(\BG,v_0)$ generates in $B(\H)$ via the representation.
This is a bit subtle, as the concrete algebra depends on the choice of base point $v_0$. We will use the notation $\A_{v_0}:=U(\pi_1(\BG,v_0)$ to emphasize 
this.\footnote{Here and below for any subgroupoid  $\P'$ of $\P$ we denote the $C^*$--subalgebra of $B(\H)$ generated by the morphisms of $\P'$ via $U$  by $U(\P')$.}

Of course any two choices of a base vertex give isomorphic algebras
but there is no preferred isomorphism between them. In the physical situation of wire networks, we are interested
in \S\ref{wiregeosec}, there is however a global identification of these algebras which comes from the embedding of the system into 
Euclidean space. 
Algebraically we realize this as extra coherence  isomorphisms
$\coh_{*v}:\A_{v}\stackrel{\sim}{\to} \A$ with inverse $\coh_{v*}:=\coh_{*v}^{-1}$.

The direct sum of the $\coh_{v*}$ gives a representation $\symrep$ of $\A$ into $\Atot:=U(\Aut)=\bigoplus_{v\in V(\BG)}\A_{v}\subset B(\H)$. 
The algebra $\B$ is now the sub $C^*$--algebra generated by $H$ and $\symrep(\A)$. We also set $\Btot=U(\P_{\BG \, 1})\subset B(\H)$. 
 
The $C^*$--geometry we are interested in is the inclusion $\A\to\B$. We call the system commutative, if $\B$ (and hence $\A$) is commutative.
We call the situation {\em fully commutative} if in addition for any all $v$ and $w$  and any
contractible edges path $\gamma$ from $v$ to $w$ $\alpha_{*v}U_{\gamma}\alpha_{w_*}=id$.

 In the wire--network case, the condition to be fully commutative corresponds to the case of zero magnetic field. 
 
 {\sc Notation:} If we choose a fixed base point $v_0$ we will tacitly use the isomorphism $\alpha_{v_0*}$ to identify $\A$ and $\A_{v_0}$.

 \subsubsection{Matrix Hamiltonian} 
 \label{gammapar}
If we   
fix a rooted spanning tree and
an order on the vertices, we can identify $\H\simeq \H_{v_0}^k$ via
a unitary $U$ as follows and $H$ becomes equivalent to a matrix in $M_k(\A)$. 

A spanning tree $\t$ is by definition a contractible subgraph of $\BG$ which contains all the vertices
of $\BG$. It is rooted if one of the vertices is declared the root; denote it by $v_0$. 
In order to obtain a honest $k\times k$ matrix, we 
also have fix an order $<$ on all the vertices, where we insist that the root 
is the first vertex in this order.  That is we fix a rooted ordered spanning tree $\bt:=(\t,v_0,<)$.

For each vertex $v$ of $\BG$ there is a unique shortest path in $\t$ to $v_0$. This defines a choice of fixed isomorphism
$U^{\tau}_{vv_0}:\H_{v_0}\to \H_v$ by translations along the edge path, see \S\ref{isopar} for details. 
Assembling these maps gives the desired isomorphism $\H\simeq \H_{v_0}^k$. Pulling back $H$ to $ \H_{v_0}^k$ using this isomorphism, we obtain {\em a} matrix version $H_{\bt}$ where we include
the subscript to stress that this matrix depends on the choice of rooted ordered spanning tree $\bt$.

To fix the notation, which we will need later, we give the full details: let $v_i$ be the i--the vertex in the enumeration $<$.
Then we obtain a matrix $H_{\bt}$ by using the isomorphisms $U^{\tau}_{v_iv_0}$
which are defined as follows. Let $v_0,w_1,\dots,w_k,v_i$ be the sequence
of vertices along the unique shortest path $\gamma^{\t}_{v_iv_0}$ from $v_0$ to $v_i$ in $\tau$,
then
\begin{equation}
\label{Utaudefeq}
U^{\tau}_{v_iv_0}=U_{v_iw_k}U_{w_kw_{k-1}} \dots U_{w_2w_1}U_{w_1v_0}=U(\gamma^{\t}_{v_iv_0})
\end{equation}
 and
$U^{\tau}_{v_0v_i}=(U^{\tau}_{v_iv_0})^*$. Given the choice of $\bt$, we get the corresponding
matrix Hamiltonian as
\begin{equation}
H_{\bt}:\bigoplus_{i=1}^k \H_{v_0}\stackrel{\bigoplus_iU^{\tau}_{v_0v_i}}{\longrightarrow}
\bigoplus_i \H_{v_i}=\H \stackrel{H}{\longrightarrow}
\H=\bigoplus_i \H_{v_i}
\stackrel{\bigoplus_iU^{\tau}_{v_iv_0}}{\longrightarrow}
\bigoplus_i \H_{v_0}
\end{equation}

Of course, all the $\H_{\bt}$ are equivalent, although not canonically. For the equivalence one
has to choose a path from one root to the other. We will exploit this fact extensively below.

\subsubsection{$\A$ weighted graph}
After having fixed an initial spanning tree the matrix Hamiltonian has a different description.
To each edge $\vec{e}$ from $v$ to $w$, we can associate $\wt(\vec{e}):=U^{\tau}_{v_0v}U_{\vec{e}}U^{\tau}_{vv_0}\in U(\A)$.
That is, we can regard $\BG$ as having a weight function on ordered edges with weights in $U(\A)$.
If $e$ is an edge of $\tau$ then with this definition $\wt(\vec{e})=\wt(\cev{e})=1$.

An alternative way of viewing this data is as a certain type of quiver representation, we will comment on this more below.

\subsubsection{Weights as a representation of the fundamental group}
\label{pipar}
For any finite  graph $\BG$ the Euler Characteristic is $\chi(\BG)=|V(\BG)|-|E(\BG)|=1-b_1$ where $b_1$ is the rank of $H_1(\BG)$ which
is the same as the ``number of loops''. More precisely,  if $\BG$ is connected,  $\pi_1(\BG)={\mathbb F}_{b_1}$ that is the free group in $b_1$ generators. In the applications  $b_1$ is the rank of  the lattice of translational symmetries.

One way to view a rooted spanning tree ($\t,v_0)$ is to think of it as fixing a base point $v_0$ and a set of symmetric generators/basis for $\pi_1(\BG,v_0)=Hom_{\mathcal P}(v_0,v_0)$. Topologically after contracting the spanning tree one is left with a wedge of  $S^1$s. There are $b_1$ of
these, one for each non--contracted edge.
Each simple loop around one of the $S^1$s gives a generator. Picking  one generator per loop gives a basis.

Without doing the contraction the correspondence on $\BG$ itself is 
 given by all the (ordered) edges not contained in $\t$. To each such ordered edge $\vec{e}$ from $v$ to $w$, we associate the loop 
starting at $v_0$ going along $\t$ to $v$ then traversing $\vec{e}$ to $w$ and afterwards returning to $v_0$ along $\t$.
Again picking both orientations gives a symmetric set of generators of the free group while picking only one
orientation per edge fixes a basis. Any edge in the spanning tree corresponds to the unit, that is the class of a constant loop.

In this language, the weight function $\wt$ is a representation of $\pi_1$ lifted to the
edges of the graph by the above correspondence.
Thus, as long as the base point $v_0$ stays fixed, the changes of spanning tree  can be viewed as a change--of--basis
of $\pi_1(\BG,v_0)$.
If $v_0$ moves, say to $v_0'$, then, as usual, any path from $v_0'$ to $v_0$ 
gives an isomorphism  taking $\pi_1(\BG,v_0')$ to $\pi_1(\BG,v_0)$.
Both types of isomorphisms will play a role later in the symmetry group actions.

\subsubsection{Non--degeneracy and toric non--degeneracy}
\label{nondegsec}
We call a groupoid representation non--degenerate, if the images of the basis of the free group
given by the construction above are independent unitary generators 
of $\A$ and call it 
toric non--degenerate if $\A$
is isomorphic to the noncommutative torus $\TTheta^{b_1}$.

Notice that if $\A$ is commutative and non--degenerate, then $\A\simeq \T^{b_1}$, the $C^*$ algebra of the 
torus of dimension $b_1$.

\subsubsection{Hamiltonian and $\A$ from a weighted graph}
\label{graphhamsec}
Alternatively to starting with a groupoid representation 
one can also start with an $\A$--weighted graph. It is in this representation that we can understand the re--gauging groupoid $\Gpd$.

Fix a finite  connected graph $\bar \Gamma$,
a rooted ordered spanning tree $\bt$ of $\bar \Gamma$ such that
 the root of $\tau$ is the first vertex, a unital $C^*$ algebra $\A$, and a morphism
$\wt:\{\text{Directed edges of } \G\}\to \A$ which satisfies
\begin{enumerate}
\item $\wt(\vec{e})=\wt(\cev{e})^*$
if $\vec{e}$ and $\cev{e}$ are the two orientations of an edge $e$.
%\item $\wt(\vec{e})\wt(\cev{e})=1$
\item $\wt(\vec{e})=1\in \A$ if the underlying edge $e$ is in the spanning tree.
\label{normcond}
\end{enumerate}
In general, if $\wt$ is as above and it satisfies the first condition, we will call it a weight function (with values in $\A$)
and if it satisfies both conditions, a weight function compatible with the spanning tree.

Fix a separable Hilbert space $\H_{v_0}$. By Gel'fand Naimark representability we realize
$\A\subset B(\H_{v_0})$ and we shall use this representation. 

We shall also postulate that 
{\em $\A$ is minimal}, which means that it is the $C^*$--algebra generated by the $\wt(\vec{e})$ where $\vec{e}$ 
runs through the directed edges of $\BG$. This makes the terminology of \S\ref{nondegsec} applicable.
Also, we see that this is again just a lift of a representation of $\pi_1(\BG,v_0)$ to the edges of $\BG$
using the spanning tree $\t$.

Given this data, let $k$ be the number of vertices of $\Gamma$. We will enumerate
the vertices $v_0,\dots, v_{k-1}$ according to their order; $v_0$ being the root.
Given this data, the Hamiltonian $H=H(\Gamma,\bt,w)$ is the matrix $H=(H_{ij})_{ij}\in M_k(\A)$ 
whose entries are:
\begin{equation}
H_{ij}=\sum_{\text{directed edges $\vec{e}$ from $v_i$ to $v_j$}} \wt(\vec{e})
\end{equation}
 It acts naturally on
 $\H:=\H_0^k$.
In this sense, the weighted graph encodes  both the Hamiltonian {\em and} the symmetry algebra $\A$.

In the general noncommutative
case, this is {\em not quite enough} for the whole theory, as we do not recover the action of $\A$ on $\H$ and
the connection between the action of $H$ and that of $\A$. 
 Recall that the action of $\A$ on $\H=\H_{v_0}^k=\bigoplus_{v\in V(\BG)} \H_{v_0}$
is given on each summand $\H_{v_0}$ corresponding to $v$ by pulling back the action from $\H_v$.
That is, the true action is a conjugated action.  

In the {\em commutative case} this is not an issue as the representation is exactly the diagonal representation.

\subsubsection{Geometry in the commutative case}
\label{comgeopar}
If $\B$ is commutative (and hence also $\A$),\footnote{ This is for instance the case in the applications if the magnetic field vanishes.} then there is a geometric version 
of these algebras which can be understood as  the spectra of a family of Hamiltonians over a base.
We have the following inclusion of commutative $C^*$ algebras $i:\A\hookrightarrow \B$,
by Gel'fand Naimark this gives us a surjection of compact Hausdorff spaces\footnote{Both $\A$ and $\B$ are unital.} 
$\pi:Y\to X$ where $C(X)\simeq \A$ and $C(Y)\simeq \B$. 
The correspondence is given via characters. Namely a character is a $C^*$--homomorphism 
$\chi:\A\to \C$ . The characters are by definitions the points of $X$. Vice--versa
any point $t\in X$ determines a character $ev_t:C^*(X)\to \C$ via evaluation. That is any $f\in C^*(X)$
is sent to $f(t)\in C$. Given a character $\chi$ on $\A$, we can lift it to a $C^*$--morphism
 $\hat \chi:M_k(\A)\to M_k(\C)$ by applying it in each matrix entry.
 
 Thus any point $t\in X$ represented by the character $\chi$ determines a Hamiltonian 
 $\hat \chi(H)\in M_k(\C)$ via $\hat\chi$. 
\begin{equation}
(\hat \chi(H))_{ij}=\chi(H_{ij})
\end{equation}

Thus we get a family of Hamiltonians $H(t)$ parameterized over the base.
One can furthermore check, see \cite{kkwk}, that $\pi$ is a branched cover over $\A$ with 
$\pi^{-1}(t)=spec(H(t))$.

\subsection{Physical example: PDG and honeycomb wire networks}
\label{wiregeosec}

The PDG examples are based on the unique triply periodic CMC surfaces
 where the skeletal graph is symmetric and self--dual.
Physically, in the P (primitive), D (diamond) and G (Gyroid) case, one starts with a ``fat'' or thick version of this surface,
which one can think of as an interface. A solid-state realization of the ``fat'' Gyroid aka.~ double Gyroid has recently been synthesized 
on the nano--scale \cite{Hillhouse}. 
The structure contains three components, the ``fat'' surface or wall and two channels.
Urade et al.\ \cite{Hillhouse} 
have also demonstrated a nanofabrication technique in which the channels are 
filled with a metal, while the silica wall can be either left in place or removed. This yields
two wire networks, one in each channel. The graph we consider and call Gyroid graph
 is the skeletal graph of one of these channels.  
 The graph Hamiltonian we construct algebraically below is the tight-binding
Harper Hamiltonian for one channel of this wire network.
 The 2d analogue is the honeycomb lattice underlying graphene. 
Graph theoretically the quotient graph for the honeycomb is the 2d version of that of the D surface, but as we showed, the behavior, such as the existence of Dirac points \cite{kkwkdirac}, is more like that of the Gyroid surface. 

To formalize the situation, we regard the skeletal graph of one channel as
 an embedded graph  $\G\subset \R^n$. The crystal structure  gives a  maximal translational symmetry group $L$ which is a 
mathematical lattice, i.e.\ isomorphic to $\Z^n$, s.t.\ $\BG:=\G/L$ is a {\em finite} connected graph. The vertices of
this quotient graph are the elements in the primitive cell.

The Hilbert space $\H$ for the theory is $\ell^2(V(\G))$, where $V(\G)$ are the vertices of $\G$.
This space splits as $\bigoplus_{v\in V(\BG)} \H_v$ where for each vertex $v\in V(\BG)$, $\H_v=\ell^2(\pi^{-1}(v))$
where $\pi:\G\to \BG$ is the projection. All the spaces $\H_v$ are separable Hilbert spaces and hence
isomorphic. Furthermore if $\vec{e}$ is a directed edge from $v$ to $w$ in $\BG$ then it lifts uniquely as a vector to $\R^n$,
which we denote by the same name. Moreover for each such vector there is a naturally associated translation operator 
$T_{\vec{e}}:\H_v\to \H_w$, by the usual action of space translations on functions. 
We also allow for a constant magnetic field $B=2\pi \hat \Theta$ where
 $\hat \Theta=\sum \theta_{ij} dx_i\wedge dx_j$ is  a constant 2--form given by the skew--symmetric matrix $\Theta=(\theta_{ij})_{ij}$.
If $\Theta\neq 0$   then the translations become magnetic translations or Wannier operators $U_{\vec{e}}:\H_v\to \H_w$; see e.g.\
 \cite{bellissard}.
These operators are still unitary and give partial isometries when regarded on $\H$ via projection and inclusion, which
we again denote by the same letter.
The Harper Hamiltonian is then defined by equation (\ref{haperhameq}).

Likewise $L$ acts by magnetic translations. If $\vec{\lambda}$ is a vector in $L$ then
$U_{\vec{\lambda}}$ sends each $\H_v$ to itself and the diagonal action gives an action on $\H$. Since $L$ is a lattice,
the representation it generates is given by $n$ linearly independent unitaries. The commutation relations
among the Wannier operators amount to the fact that the representation is a copy of $\TTheta^n$,
 the noncommutative $n$--torus, see \cite{kkwk} or \cite{bellissard}. The direct sum yields the global symmetry representation $\alpha:\TTheta^n\to B(\H)$.
This representation and $H$ by definition generate the Bellissard-Harper $C^*$--algebra $\B$.

We see that this is an example of the general setup where the groupoid representation of $\BG$ is given by the $U_{\vec{e}}$. Here
$\A=\TTheta^n$ and the Hamiltonian associated to the graph is the Harper Hamiltonian. If $B=0$ then the situation is fully commutative.
Notice that $\A$ being commutative just means that the fluxes though the $L$ lattice are $0$. If $\B$ is commutative, this means
the the fluxes cancel according to the entries of $H$. If there are vertices with more than one edge, potentially the situation is commutative, but
not fully commutative.

\section{Symmetries}
In order to deal with the symmetries of the graph it is helpful to first fix the notation as
sometimes questions become subtle.

\subsection{Classical Symmetries}
\label{classympar}
A graph $\G$ is  described by a set of vertices $V_{\Gamma}$ and a set of edges $E_{\G}$ together with incidence
relations $\del$ where for each edge $e$, $\del(e)=\{v,w\}$ is the unordered set
of the two vertices it is incident to. A directed edge is given by an order on this pair.
Hence for each edge $e$ there are two ordered edges  by the orders $(v,w)$ and
 by $(w,v)$. We usually denote these two edges by $\vec{e}$ and $\cev{e}$.
 The set of all oriented edges is called $E^{or}_{\Gamma}$.

 An isomorphism $\phi$ of two graphs $\G$ and $\G'$ is a pair of bijections $(\phi_V,\phi_E)$
 $\phi_V:V_{\G}\to V_{\G'}$ and $\phi_E:E_{\G}\to E_{\G'}$. The compatibility is that the incidence conditions are preserved, i.e.: if $\del(e)=\{v,w\}$ then
 $\del(\phi_E(v))=\{\phi_V(v),\phi_V(w)\}$. Notice the $\phi$ also induces a map
 of oriented edges, the orientation of the edge $\phi_E(e)$ given by $(\phi_V(v),\phi_V(w))$, if the orientation of $e$ is $(v,w)$.

 We will treat isomorphism classes of graphs from now on.  Fixing an isomorphism class of a graph still allows for automorphisms. These are given as follows.
 Fix a representative of $\Gamma$ then an automorphism is a pair of compatible maps $(\phi_V,\phi_E)$;
 $\phi_V:V_{\G}\to V_{\Gamma}$ and $\phi_E:E_{\G}\to E_{\G}$.

 \begin{ex}
 Let us illustrate this for the  graphs corresponding to the PDG and honeycomb cases
 which are given in Figure \ref{graphsfig}; see \S\ref{wiregeosec} for details about the corresponding wire networks. 
 We fix once and
 for all an isomorphism  class of these graphs and then consider their automorphisms using the representatives given in the figure.

\begin{figure}
\includegraphics[width=.8\textwidth]{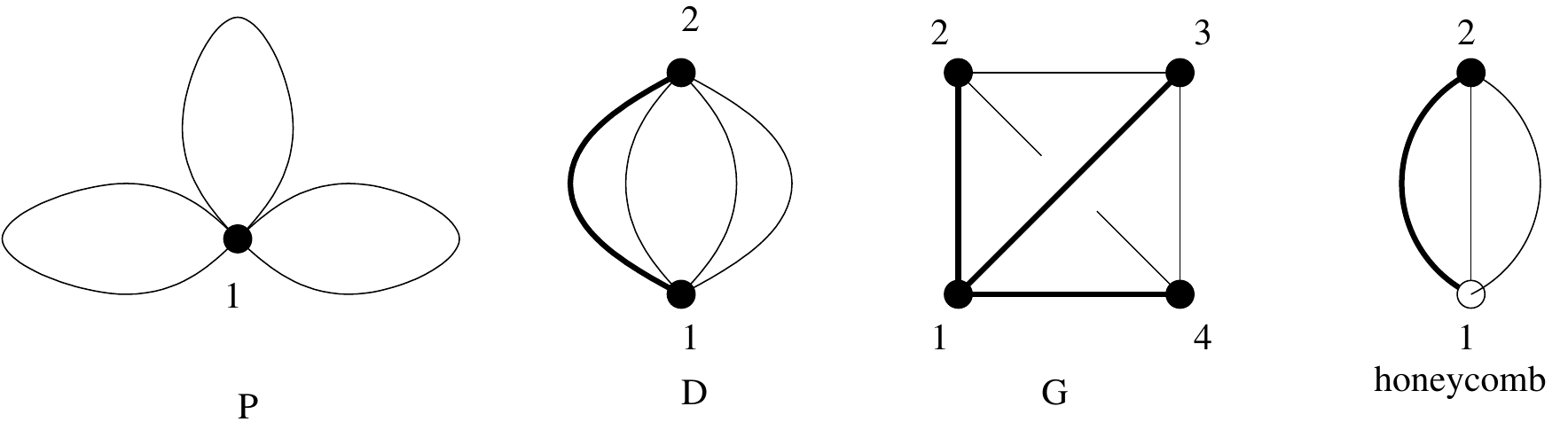}
\caption{The graphs P, D and G and honeycomb together with the preferred spanning tree and order.}
\label{graphsfig}
\end{figure}

 For the P case there is only one vertex hence $\phi_V=id$ is the only possibility. However there is an $\SS_3$ action permuting the three loop edges.

 The D graph has the possibility of switching the two vertices and freely permuting the three edges. This gives the automorphism group $\Z/2\Z\times \SS_3$.
 The honeycomb similarly has automorphism group $\Z/2\Z\times \SS_3$.

 For the Gyroid, there is an $\SS_4$ worth of potential choices for $\phi_V$. Now all these choices extend uniquely to the edges, since there is exactly one edge between
 each distinct pair of vertices and hence the symmetry group is exactly $\SS_4$.
\end{ex}

\subsubsection{Pushing forward Spanning Trees}
Given a pair $(\G,\t)$ of a graph and a rooted spanning tree, we define the action of isomorphisms
and automorphisms by push--forward.
That is an isomorphism between $(\G,\t)$  and $(\G',\t')$ is an isomorphism from $\G$ to $\G'$ such that
that $\phi_V$ maps the root of $\tau$ to the root of $\tau'$ and $\phi$ restricted to $\tau$ is
an isomorphism onto $\tau'$.

If we have not already specified a spanning tree on $\G'$ we can extend any isomorphism  $\phi$ from $(\G,\t)$ to it
by push--forward.
 This means that we push--forward all the vertices
and the edges of the spanning tree $\t$ to $\G'$: $E_{\tau'}:=\phi_E(E_{\tau})$ and likewise push--forward the root.

In particular, $Aut(\G)$ acts on the set of spanning trees of a fixed graph $\G$.
This action is not transitive in general and may have fixed points.

\begin{ex}
In the cases of PDG and the honeycomb, it is a transitive action.

For the G graph the action is not fixed point free,
 there is an $\SS_3$ subgroup fixing a given spanning tree.

For the P graph the action is  fixed point free, while for the D and the honeycomb
although the action is transitive, there are again stabilizers. For the honeycomb the group fixing
a spanning tree is the $\SS_2=\Z/2\Z$ interchanging the two other edges, with both vertices fixed,
while in the D case it is an $\SS_3$ action interchanging the edges which are not part of the spanning tree.

As an example of a non--transitive action consider the triangle graph, with one edge doubled. That is three vertices $1,2,3$ with one edge between $1$ and $2$, one edge between $1$ and $3$ and two edges between $2$ and $3$. 
\end{ex}

\subsubsection{Orders, Weight Functions and Isomorphisms}
If there is an order on all the vertices then the isomorphisms are asked to be compatible with this
order and auto-- and isomorphisms can be extended by 
pushing forward the order.

A final piece of data on a graph, which we utilize is the $C^*$ algebra valued map $\wt:E^{or}_{\Gamma}\to \A$,
such that $\wt(\cev{e})=\wt(\vec{e})^*$.
In our case of interest $\A=\TTheta^n$, and even $\T^n_0$ later, but for now we will keep the abstract setting.

Notice that a given weight function might not be 
compatible with the new, pushed--forward spanning tree, 
since it does not necessarily satisfy the condition (\ref{normcond}).
The idea of re--gauging is to re--establish condition (\ref{normcond}) 
by passing to an equivalent, re--gauged
weight function.

One way to see the failure to preserve the condition (2) of \S\ref{graphhamsec} 
is to view the change of spanning tree
as a change of basis of $\pi_1$ or rather an isomorphism of $\pi_1$s
via the path--groupoid. This point of view is at the basis of the proof of Theorem
\ref{autothm} below.

\subsubsection{Classical vs.\ extended symmetries}
\label{classicalpar}
There is another
 natural choice of iso-- or automorphism for graphs with weight functions.
 Here one would postulate that the weight functions are compatible.
Weight functions  naturally pull back via $\phi_E$,
that is $\phi_E^*(w)(\vec{e})=\wt(\phi_E(\vec{e}))$. Using that $\phi_E$ is an isomorphism,
one can push--forward by pulling back along $\phi_E^{-1}$.

One could call these symmetries {\em classical symmetries of the weighted graph.} These are the kinds of symmetries that were for instance considered by \cite{Avron}. 

We will consider symmetries of the {\em underlying graph}, not of the weighted graph.
The weights are taken care of by re--gauging. One way to phrase this is that we utilize an extended symmetry group
which allows for phase factors at the vertices. The details are given below.

 \subsection{Gauging}
 \label{isopar}
We will now consider the relationship between the different matrices $H_{\bt}$ and $H_{\bt'}$ for different gauging data,
which  represent $H$
 via different isomorphisms. Here and in the following we use the notation: $\bt=(\t,v_0,<),\bt'=(\t',v_0',<')$, etc.
 
 There are basically three situations:
First, $\tau=\tau'$ as rooted trees and only
the order changes. Secondly, $\tau=\tau'$ but $v_0\neq v_0'$ and
thirdly, the trees just do not coincide, i.e.~ at least one edge is different.
 In the first  case the isomorphisms are simply changed by permutations
of the factors $\H_{v_0}$. This means that
the difference between the two matrix Hamiltonians is simply conjugation
by $M_{\sigma}$, the standard permutation matrix.
  Since $v_0$ is fixed to be the first element, the permutation actually lies
in the subgroup $\SS_{k-1}\subset \SS_k$ fixing the first element.
 $\SS_{k-1}$ then acts simply transitively
on the orders.

In the second and third case, the situation is more complicated. Of course the
second type of change is related to an action of $\SS_k$, but things are not
that simple, since there is a change of the space that $H_{\bt'}$ acts on.
If the tree moves, then we have to also change the weight functions to make them
compatible with the new spanning tree. 

Taking the point of view of $\pi_1$'s the first type of transformation is just a permutation of the basis.
But when we move the base point,
we  move to an isomorphic group. In doing this, we effectively use the path groupoid and not just the fundamental group.

\subsubsection{Gauging in groupoid representations}
As discussed in \S\ref{groupreppar} the Harper Hamiltonian {\em before} fixing a spanning tree
can be thought of as a certain type of groupoid representation. 

For such  representation,
we can re--gauge it to an equivalent representation  by acting with any of choice automorphisms of the $\H_v$,
that is the group $\times_{v\in \G}Aut(\H_v))$. Picking an element $\phi$
in this group is the same as the assignment $v\mapsto \phi(v)\in Aut(\H_v)$.
The operators $U_{\vec{e}}:\H_v\to \H_w$, where $\vec{e}=(v,w)$
get re--gauged to $\phi(w)U_{\vec{e}}\phi^{-1}(v)$. Again, one has to be careful with the
indexing of the direct sums. Since there is no natural order, there is a natural $\SS_k$ action
by permutations this interacts with the diagonal re--gaugings via the wreath product.

In our situation, since we have Hilbert spaces, we can look at unitary equivalences
and restrict the automorphisms to be unitary.
Note that the gauge group is smaller than the full group of unitary equivalences $U(\H).$

Also, choosing an identification of all the isomorphic separable Hilbert spaces
$\H_v$ with some fixed $\H_{v_0}$ we can take the re--gaugings to live in the unitary operators
on $\H_{v_0}$.

In this situation,  the gauge group becomes $G=U(\A)^k\wr \SS^k$. It acts
on the orders, the weight functions and on the Hamiltonians by conjugation and permutation just as above.

\subsubsection{Spanning tree re--gauging}

\begin{prop}
\label{regaugethm} Given two ordered rooted spanning trees $\bt$ and $\bt'$ there
is a matrix $M\in M_k(\A)$ with $MM^*=M^*M=id$ such that
$MH_{\bt}M^*=U^{\tau}_{v_0v'_0}H_{\bt'}U^{\tau}_{v'_0v_0}$. Moreover $M$ is an element of the gauge group. 
\end{prop}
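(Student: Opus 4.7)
The plan is to exhibit $M$ directly as a composition built from the gauging isomorphisms for $\bt$ and $\bt'$, then verify the conjugation identity and the gauge--group membership separately.

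For any ordered rooted spanning tree $\bt=(\t,v_0,<)$, write
\[
V_{\bt} \;:=\; \bigoplus_{i} U^{\t}_{v_i v_0}\;:\;\H_{v_0}^{k}\longrightarrow \H,
\]
so by the construction recalled in \S\ref{gammapar} one has $H_{\bt}=V_{\bt}^{*}HV_{\bt}$. Given a second datum $\bt'=(\t',v_0',<')$, let
\[
\tilde W \;:=\; \bigoplus_{j} U^{\t}_{v_0 v_0'}\;:\;\H_{v_0'}^{k}\longrightarrow \H_{v_0}^{k}
\]
act diagonally by the unique $\t$-path from $v_0'$ to $v_0$, so that $U^{\t}_{v_0 v_0'}H_{\bt'}U^{\t}_{v_0' v_0}=\tilde W\,H_{\bt'}\,\tilde W^{*}$. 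Setting
\[
M \;:=\; \tilde W\,V_{\bt'}^{*}\,V_{\bt},
\]
a direct computation gives $MH_{\bt}M^{*}=\tilde W\,V_{\bt'}^{*}HV_{\bt'}\,\tilde W^{*}=\tilde W\,H_{\bt'}\,\tilde W^{*}=U^{\t}_{v_0 v_0'}H_{\bt'}U^{\t}_{v_0' v_0}$, and $M$ is a product of unitaries, hence unitary.

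The remaining task is to identify $M$ as an element of the gauge group $G=U(\A)^k\wr \SS_k$. Unwinding the definitions, the $(j,i)$ block $M_{ji}:\H_{v_0}\to \H_{v_0}$ vanishes unless the $j$-th vertex under $<'$ equals the $i$-th vertex under $<$; calling this common vertex $v$, one finds
\[
M_{ji} \;=\; U^{\t}_{v_0 v_0'}\,U^{\t'}_{v_0' v}\,U^{\t}_{v v_0} \;=\; U(\gamma_{ji}),
\]
where $\gamma_{ji}$ is the concatenated edge--path $v_0\stackrel{\t}{\to}v\stackrel{\t'}{\to}v_0'\stackrel{\t}{\to}v_0$, that is a loop based at $v_0$. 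Its class therefore lies in $\pi_1(\BG,v_0)$, so $M_{ji}\in U(\pi_1(\BG,v_0))\subset U(\A)$. Letting $\sigma\in\SS_k$ be the permutation that relabels $<$ to $<'$, this exhibits $M$ as a diagonal unitary in $U(\A)^k$ composed with the permutation matrix $M_{\sigma}$, whence $M\in G$.

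The one genuinely non--formal step is the identification of $M_{ji}$ with a single $U(\gamma_{ji})$: it uses the groupoid relation $U_{\cev e}=U_{\vec e}^{*}$ to concatenate the three tree paths into an honest loop based at $v_0$ whose class lies in $\pi_1(\BG,v_0)$ and is thereby represented in $\A$. Everything else is bookkeeping. A more pedestrian but equivalent route would reduce $\bt\to\bt'$ to a sequence of elementary moves (transposition of two consecutive vertices in the order, migration of the root along one edge of $\t$, and a single edge--swap between $\t$ and one of its fundamental cocircuits), supplying $M$ in each case as a permutation matrix, a block--diagonal unitary in $U(\A)^k$, or a single loop insertion, and then composing inside $G$.
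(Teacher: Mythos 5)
Your proof is correct and takes essentially the same route as the paper: you conjugate by the composite of the two gauging isomorphisms (pulled back to $\H_{v_0}^k$ along the $\t$-path from $v_0'$ to $v_0$) and identify the nonzero blocks of the resulting matrix as holonomies $U(\gamma)$ of loops based at $v_0$, hence elements of $U(\A)$ arranged against a permutation matrix. The paper performs the identical computation entry-by-entry via a commutative diagram, writing $M=\Phi^*M_{\sigma^{-1}}$ with $\phi_{j'}=U^{\tau}_{v_0v_j}U^{\tau'}_{v'_{j'}v'_0}U^{\tau}_{v'_0v_0}$, which is exactly your $U(\gamma_{ji})^*$.
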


\begin{proof} Consider the commutative diagram :
\begin{equation}
\xymatrix{
\bigoplus_{i=1}^k \H_{v_0}\ar[r]^{\bigoplus_iU^{\tau}_{v_0v_i}}\ar[d]^{\bigoplus U^{\tau}_{v_0v'_0}}&
\bigoplus_i \H_{v_i}=\H\ar[r]^{H}\ar[d]^{\sigma}&
\H=\bigoplus_i \H_{v_i}\ar[r]^{\bigoplus_iU^{\tau}_{v_iv_0}}\ar[d]_{\sigma}&
\bigoplus_i \H_{v_0}\ar[d]\ar[d]^{\bigoplus U^{\tau}_{v_0v'_0}}\\
\bigoplus_{i=1}^k \H_{v'_0}\ar[r]^{\bigoplus_iU^{\tau'}_{v'_0v'_i}}&
\bigoplus_i \H_{v'_i}=\H\ar[r]^{H}&
\H=\bigoplus_i \H_{v'_i}\ar[r]^{\bigoplus_iU^{\tau'}_{v'_iv'_0}}&
\bigoplus_i \H_{v'_0}
}
\end{equation}

We see that if $i'=\sigma(i)$ and $j'=\sigma(j)$ so that $v'_{i'}=v_{i}$:
\begin{eqnarray*}
&&(H_{\bt'})_{i'j'}=U^{\tau'}_{v'_0v'_{i'}}H_{v'_{i'}v'_{j'}}U^{\tau'}_{v'_{j'}v'_0}\\
&=&U^{\tau}_{v_0'v_0}(U^{\tau}_{v_0v'_0}U^{\tau'}_{v'_0v'_{i'}}
U^{\tau}_{v_iv_0}) (U^{\tau}_{v_0v_i} H_{v_iv_j}
U^{\tau}_{v_jv_0}) (U^{\tau}_{v_0v_j}
U^{\tau'}_{v'_{j'},v'_0}U^{\tau}_{v'_0v_0})U^{\tau}_{v_0v'_0}\\
&=&U^{\tau}_{v'_0v_0}\phi^*_{i'}(H_{\bt})_{ij}\phi_{j'} U^{\tau}_{v_0v'_0}
\end{eqnarray*}
With $\phi_{j'}= U^{\tau}_{v_0v_j}
U^{\tau'}_{v'_{j'},v'_0}U^{\tau}_{v'_0v_0}\in U( \A)$.
So that if $\Phi=diag(\phi_{i'})$ and  $M_{\sigma}$ is the permutation matrix of $\sigma$ which moves the order $<$ to $<'$ then $M=\Phi^* M_{\sigma^{-1}}$ and
 $M^*M=id$.
\end{proof}

We choose to place $M$ on the left of the Hamiltonian so as to get a left action later on. 

\begin{rmk} 
Unraveling the definition given in equation (\ref{Utaudefeq}),
we can express the matrix $\Phi$  as a re--gauging by the following iterative procedure.
We start at
the root of $\tau'$ and choose $\phi(v'_0)=id$. Assume we have
already assigned weights to all vertices at distance $i$  from $v'_0$ and let $w$ be a
vertex at distance $i+1$. Then there  a unique $v$ at distance $i$ which is connected to $w$ 
along a unique directed edge  $\vec{e}$ of the spanning tree $\tau'$. 
Set $\phi(w)=\wt(\vec{e})\phi(v) \in U(\A)$.
Then $\Phi=diag_{v_i'\in \tau'}(\phi(v_i'))$.
\end{rmk}

Of course the form of $M$ depends on the initial choice of $\phi(v_0)=id$,
 which amounts to using the iso $U^{\tau}_{v_0v'_0}$ to pull--back the matrix.
Any other choice of iso will differ by an element of $\A$ which
is then the value of $\phi$ on $v_0$. This plays a crucial role later.

\subsubsection{Commutative case. Reduced gauge group.}

In the commutative case, we can fix a character $\chi:\A\to \C$ and then under $\hat\chi$  all matrices become
$U(1)$ valued and all the Hilbert spaces $\H_v$ 
become identified with $\C$ . In this case, we can identify the gauge group action with an action
of
 $U(1)^{\times V_{\G}}$ on $U(1)$--valued  weight functions, using $\lambda=\chi\circ\phi$. For every oriented
 edge $\vec{e}$ from $v$ to $w$ the re--gauged weights are

 $$\wt'(\chi(\vec{e})):=\lambda(v)\chi(\wt(\vec{e}))\bar \lambda(w)$$

Notice that  we have taken the indexed un--ordered product. If we fix an order of the vertices,
then the group $\SS_k$ acts on the vertices as well and the full gauge group which acts on the Hamiltonians
by conjugation is
the wreath product $G=U(1) \wr \SS_k$. 

 We see that the constant functions $\lambda$ act trivially
and hence to get a more effective action we can quotient by the diagonal $U(1)$ action
and consider the reduced gauge group $\bar G:=G/U(1)$, where $U(1)$ is diagonally
embedded in $U(1)^k$ and $\SS_k$ acts trivially.

Abstractly $U(1)^k/U(1)\simeq U(1)^{k-1}$, 
to make this explicit,
we can choose a section of $\bar G\to G$.  Our choice
$\phi(v_0)=1$ is  just such a choice of a section. The action of $\SS_k$ on the remaining $k-1$ factors
is then more involved, however. It is still a semi--direct product, but not a wreath product any more. This has practical relevance in the Gyroid case.

The proof of the theorem above then boils down to the fact that
a rooted spanning tree uniquely fixes a unique gauge transformation as follows.
We let $\lambda(root)=1$ by the global gauge $U(1)$. Now 
 the weight on each vertex of the  tree is fixed  iteratively by the
 condition that $\lambda \wt (e)=1$. The whole set of weights then
gives a diagonal unitary matrix and taking the product with the appropriate permutation,
we obtain the matrix $M$.

\subsection{Re-gauging groupoid $\Gpd$, representations, cocycles and extensions}
In order to keep track of all the re--gaugings and ultimately find the extended symmetries, we introduce the following abstract groupoid 
$\Gpd$. It has rooted ordered spanning trees $\bt$
of $\BG$ as objects and a unique isomorphism between any two such pairs. If the two pairs
coincide, the isomorphism is the identity map.

Having fixed the representation $\rep$, 
there is an induced representation $\hrep$ of $\Gpd$ which also takes values in separable Hilbert spaces.
On objects it is given by  $\hrep(\bt)=\H_{v_0}^k$, $v_0$ being the base point of $\t$. For a re-gauging morphism $g:\bt\to \bt'$
we set $\hrep(g)=U^{\t}_{v'_0v_0}M$ for the $M$ of Proposition \ref{regaugethm}.  Plugging into the definitions one checks that indeed $\hrep(g)\hrep(h)=\hrep(gh)$
for composable $g$ and $h$.

In order to find the symmetry groups, we will however need to consider only the matrix ``$M$'' part of $\rho$. This is not a representation,
but gives rise to a noncommutative 2--cocycle and moreover, this cocycle can be lifted to the groupoid level.

\subsection{Induced structures and cocycles}
To understand the cocycle, let us first 
consider the ``$U$''--part of $\rho$. For this we notice that there is a functor $p:\P_{\Gpd}\to \P_{\BG}$ from the path space of $\Gpd$ to that of $\BG$.
It is given by $p(\bt)= v_0$ and $p(\bt\stackrel{g}{\to} \bt')=\gamma^{\tau}_{v_0'v_0}$, the shortest path of \S\ref{gammapar}.
We can now compose with $\rep$ and obtain $\urep:=p\circ\rep$ on objects and morphisms. 
I.e. for $g:\bt\to \bt'$ we have $\urep(g)=U^{\t}_{v'_0v_0}:\H_{v_0}\to \H_{v_0'}$.
 This is not a representation of $\Gpd$, but for $g$ as above and $h:\bt'\to \bt''$ it
satisfies 
\begin{equation}
\label{cocycdefeq}
\urep(h)\urep(g)=\urep(hg)C^-(h,g), \text{ with }  C^-(h,g):=U^{\tau}_{v_0v''_0}U^{\tau'}_{v''_0v'_0}U^{\tau}_{v'_0v_0}:
\in \A_{v_0}
\end{equation}

For the $M$ part of $\rho$ the relevant cocycle will actually be the inverse of $C^-$, see also \S\ref{oppar} below. Explicitly,
$C(h,g)=C^-(h,g)^{-1}=U^{\tau}_{v_0v'_0}U^{\tau'}_{v'_0v''_0}U^{\tau}_{v''_0v_0}$. For three composable morphisms $\bt\stackrel{g}{\to}\bt'\stackrel{h}{\to}\bt''\stackrel{k}{\to} \bt'''$ one obtains 
the following  equation for $C$ by plugging in:
\begin{equation}
\label{Ccocyceq}
C(h,g)C(k,hg)=\nu(g)^{-1}\nu(h)^{-1}\nu(k)^{-1}\nu(khg)=:C(k,h,g)
\end{equation}
And if we denote conjugation of $x$ by $y$ with an upper left index ${}^yx=yxy^{-1}$ to keep with standard notation \cite{Dedecker,Kurosh},
we find the  cocycle equation
\begin{equation}
\label{cocycleeq}
{}^{\nu(g)^{-1}}C(k,h)C(kh,g)= C(h,g)C(k,hg)
\end{equation}

One can also lift the cocycle $C$ to a cocycle $l$ with values in $\Aut$.
\begin{equation}
l(h,g)=\gamma^{\tau}_{v_0v'_0}\gamma^{\tau'}_{v'_0v''_0}\gamma^{\tau}_{v''_0v_0}\in \pi_1(\BG,v_0)
\end{equation}
it satisfies the analogous equation to (\ref{Ccocyceq}) with $\nu$ replaced by $p$. We have $p(l(h,g))=C(h,g)$.

\subsubsection{Matrix version and cocycle}
\label{cyclesec}
In order to do calculations it is preferable to work with a matrix representation of the groupoid action.
The problem is that although the groupoid associates a matrix to each re--gauging, these matrices
all act in different spaces. To make everything coherent one has to use pull--backs. 
Explicitly, for $\bt\stackrel{g}{\to} \bt'$ we set $\mrep(g):=M_g:=M\in \A_{v_0}$ of Proposition \ref{regaugethm}.
If we have another regauging $\bt'\stackrel{h}{\to} \bt''$ then we cannot directly multiply the matrices $M_g$ and $M_h$ as
they have  coefficients in different algebras. We therefore define the product 
$M_h\circ_{\bt}M_g:=U^{\t}_{v_0v_0'}M_h U^{\t}_{v'_0v_0}M_g$.\footnote{Notice that here $U^{\t}_{v_0'v_0}$ is taken to be a ``scalar'' that is it acts as the $k\times k$
diagonal matrix $diag(U^{\t}_{v_0'v_0},\dots,    U^{\t}_{v_0'v_0}):\H_{v_0}^k\to \H_{v_0'}^k$.}
 A straightforward calculation shows:
\begin{prop}
\label{mainprop}
$M_h\circ_{\bt}M_g=C(h,g)M_{hg}$ with the same cocycle $C(h,g)$ as above. \qed 
\end{prop}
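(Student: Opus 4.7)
The plan is to bootstrap on the functoriality of $\rho$, which was already noted in the excerpt ("one checks that indeed $\rho(g)\rho(h)=\rho(gh)$"). For composable morphisms $\bt\stackrel{g}{\to}\bt'\stackrel{h}{\to}\bt''$, the identity $\rho(hg)=\rho(h)\rho(g)$, combined with the formula $\rho(f) = U^{\t}_{\text{target},\text{source}}\cdot M_f$ from the paragraph preceding Proposition \ref{mainprop}, yields the scalar/matrix identity
\[
U^{\t}_{v''_0 v_0}\, M_{hg} \;=\; U^{\tau'}_{v''_0 v'_0}\, M_h\, U^{\t}_{v'_0 v_0}\, M_g,
\]
where each $U$-factor is viewed as a $k\times k$ diagonal scalar operator between the appropriate copies $\H_{v_0}^k$, $\H_{v'_0}^k$, $\H_{v''_0}^k$.

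First I would left-multiply by the scalar $U^{\t}_{v_0 v''_0}$ (the unitary inverse of $U^{\t}_{v''_0 v_0}$) to land back in $M_k(\A_{v_0})$, giving
\[
M_{hg} \;=\; U^{\t}_{v_0 v''_0}\, U^{\tau'}_{v''_0 v'_0}\, M_h\, U^{\t}_{v'_0 v_0}\, M_g.
\]
Next I would insert $1 = U^{\t}_{v'_0 v_0} U^{\t}_{v_0 v'_0}$ immediately in front of $M_h$, and read off the three leading scalar $U$-factors as $C^{-}(h,g) = U^{\t}_{v_0 v''_0} U^{\tau'}_{v''_0 v'_0} U^{\t}_{v'_0 v_0}$, exactly as in equation (\ref{cocycdefeq}). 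This recasts the right side as $C^{-}(h,g)\cdot\bigl(U^{\t}_{v_0 v'_0} M_h U^{\t}_{v'_0 v_0} M_g\bigr) = C^{-}(h,g)\,(M_h\circ_{\bt} M_g)$. Multiplying on the left by the scalar $C(h,g) = C^{-}(h,g)^{-1}\in\A_{v_0}$ then yields the claimed identity $M_h\circ_{\bt}M_g = C(h,g) M_{hg}$.

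The main obstacle is not computational but bookkeeping: ensuring that the scalar $U$-factors (which map between genuinely different spaces) are interpreted consistently as diagonal $k\times k$ intertwiners, and tracking the fact that $M_g\in M_k(\A_{v_0})$ while $M_h\in M_k(\A_{v'_0})$ naturally live in different matrix algebras. The conjugation $U^{\t}_{v_0 v'_0}(-)U^{\t}_{v'_0 v_0}$ hard-wired into the definition of $\circ_{\bt}$ is precisely what transports $M_h$ into $M_k(\A_{v_0})$, so the final identity is a genuine equation in $M_k(\A_{v_0})$ with $C(h,g)$ acting as a scalar from $\A_{v_0}$.

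A more elementary alternative would dispense with $\rho$ entirely and substitute the explicit form $M_f = \Phi_f^* M_{\sigma_f^{-1}}$ (from the remark following Proposition \ref{regaugethm}) for each of $M_g$, $M_h$, $M_{hg}$, then compare $M_h\circ_{\bt}M_g$ with $C(h,g)M_{hg}$ entry by entry. This is elementary but combinatorially heavier and effectively reproves functoriality of $\rho$ along the way, which is why the conceptual route above is preferable.
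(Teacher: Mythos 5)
Your argument is correct, and since the paper itself dismisses this proposition with ``a straightforward calculation shows'' and prints no proof, there is nothing explicit to compare against; your route via the functoriality of $\hrep$ is a clean way to make that calculation precise. The chain of identities checks out: from $\hrep(hg)=\hrep(h)\hrep(g)$ and $\hrep(f)=U^{\t}_{\mathrm{target},\mathrm{source}}M_f$ one gets $U^{\t}_{v''_0v_0}M_{hg}=U^{\t'}_{v''_0v'_0}M_hU^{\t}_{v'_0v_0}M_g$, and after left-multiplying by $U^{\t}_{v_0v''_0}$ and inserting $U^{\t}_{v'_0v_0}U^{\t}_{v_0v'_0}=\mathrm{id}_{\H_{v'_0}}$ the three leading unitaries assemble to exactly $C^-(h,g)$ of equation (\ref{cocycdefeq}), while the remainder is $M_h\circ_{\bt}M_g$ by definition; inverting the unitary scalar $C^-(h,g)$ on the left gives the claim. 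Two caveats are worth flagging. First, the paper's displayed functoriality reads $\hrep(g)\hrep(h)=\hrep(gh)$, which does not typecheck as operator composition; you have (correctly, and silently) read it as $\hrep(h)\hrep(g)=\hrep(hg)$, and it would be worth saying so. Second, your proof is only as complete as that functoriality statement, which the paper also leaves as ``plugging into the definitions one checks''; it does hold (the $(i',i)$ entry of $\hrep(g)$ is the canonical intertwiner $U^{\tau'}_{v'_0v'_{i'}}U^{\tau}_{v_iv_0}$, and these compose strictly), but verifying it amounts to essentially the same bookkeeping with the explicit $M_f=\Phi_f^*M_{\sigma_f^{-1}}$ that your ``elementary alternative'' would carry out directly. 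So the two routes cost about the same in total; yours has the advantage of isolating the computation in one reusable lemma and making the appearance of $C^-(h,g)$ conceptually transparent.
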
 

Again, by a straightforward calculation:
\begin{lem}
\label{commlem}
If $\A$ is commutative, then the product is independent of the choice of pull--back $U^{\t}_{v_0v}$. Defining the product by using  conjugation by any $U(\gamma)$ with $\gamma$
a contractible path from $v_0$ to $v_0'$ will give the same result.\qed
\end{lem}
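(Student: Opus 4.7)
The plan is to reduce the claim to the observation that any loop element $U(\ell)$ commutes with the entries of $M_h$ once $\A$ is commutative, so that replacing the tree pull--back by a different one only changes things by an inner automorphism that becomes trivial.

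First, I would fix an arbitrary path $\gamma$ in $\BG$ from $v_0$ to $v_0'$ and compare it with the tree path $\gamma^{\t}_{v_0'v_0}$. Their composition $\ell := \gamma\cdot(\gamma^{\t}_{v_0'v_0})^{-1}$ is a closed edge path based at $v_0'$, and under the representation $\rep$ this yields the factorization $U(\gamma)=U(\ell)\,U^{\t}_{v_0'v_0}$ and $U(\gamma)^{-1}=U^{\t}_{v_0v_0'}U(\ell)^{-1}$, with $U(\ell)\in\A_{v_0'}$. If $\gamma$ is contractible in the sense relevant here, $\ell$ lies in the subgroupoid $\Aut$ and is in particular represented inside $\A$.

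Next, I would substitute this into the alternative product obtained by replacing the tree identifications in the definition of $\circ_\bt$ by those coming from $\gamma$. A direct rearrangement gives
\begin{equation*}
U(\gamma)^{-1} M_h\, U(\gamma)\, M_g \;=\; U^{\t}_{v_0v_0'}\bigl(U(\ell)^{-1} M_h\, U(\ell)\bigr) U^{\t}_{v_0'v_0}\, M_g,
\end{equation*}
so, comparing with $M_h\circ_\bt M_g=U^{\t}_{v_0v_0'}M_h\,U^{\t}_{v_0'v_0}M_g$, the lemma reduces to showing $U(\ell)^{-1}M_h\,U(\ell)=M_h$, where $U(\ell)$ acts as the scalar matrix $U(\ell)\cdot I_k$.

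Finally, by Proposition \ref{regaugethm} the entries of $M_h$ are built out of the weights $\wt(\vec e)\in\A$ and hence lie in $\A_{v_0'}$. Since $U(\ell)\in\A_{v_0'}$ as well and $\A$ is commutative, each entry commutes with $U(\ell)$, so the conjugation is trivial and the two products agree. The only part needing care is consistent bookkeeping of path orientations and of the identifications $\A_v\cong\A$ via the coherence isomorphisms $\alpha_{*v}$; once those conventions are pinned down the result is immediate. Commutativity of $\A$ is essential: in the noncommutative setting the inner conjugation by $U(\ell)$ genuinely modifies $M_h$, which is why Proposition \ref{mainprop} really does depend on the chosen tree and why the cocycle $C(h,g)$ appears in general.
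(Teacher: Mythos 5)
Your argument is correct and is exactly the ``straightforward calculation'' the paper leaves unwritten: you factor the alternative identification as $U(\gamma)=U(\ell)\,U^{\t}_{v_0'v_0}$ with $\ell$ a loop at $v_0'$, note that the nonzero entries of $M_h$ and $U(\ell)$ all lie in $\A_{v_0'}$, and conclude that the extra inner conjugation is trivial by commutativity. (Your computation in fact never uses contractibility of $\gamma$ beyond guaranteeing $U(\ell)\in\A_{v_0'}$, which holds for any loop, so you prove the slightly stronger statement.)
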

 \begin{cor}
 \label{commcor}
 If the situation is fully commutative, we can use the $\alpha(*v)$ to pull back all the matrices $M_g$  to matrices with coefficients in $\A$. Then the  multiplication above simply becomes matrix multiplication  in $M_k(\A)$.\qed
 \end{cor}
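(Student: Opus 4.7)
The plan is to apply the entrywise map $\alpha_{*v_0}$ to the identity of Proposition~\ref{mainprop} and observe that the conjugation twist built into $\circ_\bt$ is exactly absorbed by the coherence isomorphisms.

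For $g:\bt\to\bt'$ and $h:\bt'\to\bt''$, the matrices $M_g\in M_k(\A_{v_0})$ and $M_h\in M_k(\A_{v'_0})$ live in different algebras, so the first move is to apply the coherence isomorphisms entrywise and obtain $\tilde M_g:=\alpha_{*v_0}(M_g)$, $\tilde M_h:=\alpha_{*v'_0}(M_h)$, and $\tilde M_{hg}:=\alpha_{*v_0}(M_{hg})$, all of which then live in the common algebra $M_k(\A)$. The goal is to show
\[
\alpha_{*v_0}(M_h\circ_\bt M_g)=\tilde M_h\cdot\tilde M_g,
\]
where the right--hand side is ordinary matrix multiplication in $M_k(\A)$.

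Next I would expand $M_h\circ_\bt M_g = U^{\t}_{v_0 v'_0}\,M_h\,U^{\t}_{v'_0 v_0}\,M_g$. The three leftmost factors act as entrywise conjugation of $M_h$, giving an algebra isomorphism $\A_{v'_0}\to\A_{v_0}$ which is transport along the tree path $\gamma^{\t}_{v_0 v'_0}\subset\t$. Applying the fully commutative hypothesis $\alpha_{*v}U_\gamma\alpha_{w*}=id$ to this contractible tree path identifies that transport with $\alpha_{v_0*}\circ\alpha_{*v'_0}$. Composing on the left with $\alpha_{*v_0}$ then cancels the $\alpha_{*v_0}\alpha_{v_0*}$ pair and leaves $\alpha_{*v'_0}(M_h)=\tilde M_h$. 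Since $\alpha_{*v_0}$ is an algebra homomorphism and $\A$ is commutative, applying it entrywise to a product of two matrices over $\A_{v_0}$ gives the ordinary matrix product over $\A$, and we conclude $\alpha_{*v_0}(M_h\circ_\bt M_g)=\tilde M_h\cdot\tilde M_g$ in $M_k(\A)$.

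The main obstacle is the correct application of the fully commutative hypothesis: one must recognize that $U^{\t}_{v_0 v'_0}$ is transport along a tree path and hence falls within the scope of the coherence $\alpha_{*v}U_\gamma\alpha_{w*}=id$, so that the $\alpha$--pullback absorbs it into the identification of $\A_{v'_0}$ with $\A$. Once this is in place, combining with Proposition~\ref{mainprop} also delivers the explicit cocycle form $\tilde M_h\tilde M_g = \alpha_{*v_0}(C(h,g))\,\tilde M_{hg}$, now expressed entirely as a matrix identity in $M_k(\A)$ with ordinary matrix multiplication on both sides, as claimed.
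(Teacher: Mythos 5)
Your proposal is correct and follows exactly the route the paper intends: the corollary is stated without proof precisely because, once one unwinds $\circ_{\bt}$ and notes that $U^{\t}_{v_0v_0'}$ is transport along a contractible tree path, the fully commutative coherence condition identifies that transport with $\alpha_{v_0*}\circ\alpha_{*v_0'}$, so the entrywise $\alpha$--pullback turns the twisted product into ordinary multiplication in $M_k(\A)$. The only cosmetic remark is that commutativity of $\A$ is not actually needed for the final step (an entrywise algebra homomorphism always respects matrix multiplication); it enters earlier, via Lemma~\ref{commlem} and the fully commutative hypothesis.
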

 \subsubsection{Groupoid cocyles and extension}
 \label{oppar}
 The data of $\urep$ and $C$ as well as $p$ and $l$ technically yield a crossed noncommutative groupoid 2-cocycle \cite{Giraud,Dedecker,Kurosh}.
 In order to get one of the standard forms of the cocycle, e.g.\ that of \cite{Dedecker}, we will have to transform the pairs $(p,l)$ and $(\nu,C)$ a bit.
 It turns out that everything is more natural in the opposite groupoid $\G^{op}$ of the groupoid $\G$. This is because we 
 are actually {\em re}--gauging. On the groupoid level define  $p^{op}:\Gpd^{op}\to\P_{\BG}$ and $l^{op}:\Gpd_1^{op}{}_{t}\times_{s}\Gpd^{op}_1\to \Aut$
 \begin{equation}
 p^{op}(g^{op}):=\gamma^{\tau}_{v_0v_0'}, \quad 
  l^{op}(g^{op},h^{op}):=l(h,g)
 \end{equation}
 
 And similarly hitting the above maps with $\rep$ we get
    $\urep^{op}:\Gpd^{op}\to\Btot$ and $C^{op}:\Gpd_1^{op}{}_{t}\times_{s}\Gpd^{op}_1\to \Atot$ 
    \begin{equation}
  \urep^{op}(g^{op}):=\urep(g)^{-1}, \quad
  C^{op}(g^{op},h^{op}):=C(h,g)
 \end{equation}
 
 Now $\Aut$ is a $\P_{\BG}$ crossed module via the inclusion $i:\Aut\to \P_{\BG}$ and the conjugation action $\Phi: \P_{\BG}\times \Aut \to \P_{\BG}: (\gamma,l)\mapsto
 \gamma l\gamma^{-1}$. Analogously $\Atot$ is a $\Btot$ crossed module via inclusion and conjugation action. 
   \begin{prop}
   The pair $(p^{op},l^{op})$ are an element of $C^2_{\P_{\BG}}(\Gpd ,\Aut)$ that is a $\P_{\BG}$--crossed $\Gpd$ 2--cocycle with values in $\Aut$. Likewise the pair $(\urep^{op},C^{op})$ are an element of $C^2_{\Btot}(\Gpd^{op} ,\Atot)$.
   \end{prop}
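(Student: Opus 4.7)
The plan is to reduce both claims to identities already proved in the excerpt, once we pass to the opposite groupoid $\Gpd^{op}$. Recall (Dedecker) that an $H$-crossed $\Gpd^{op}$ 2-cocycle with values in a crossed module $i: K \to H$ is a pair $(\phi, f)$, with $\phi : \Gpd^{op} \to H$ a map on objects and morphisms and $f$ a 2-cochain valued in $K$, satisfying the twisted morphism law $\phi(g^{op})\phi(h^{op}) = i(f(g^{op}, h^{op}))\,\phi(g^{op}\cdot h^{op})$ and the non-abelian 2-cocycle identity
\[
{}^{\phi(g^{op})}\!f(h^{op}, k^{op})\cdot f(g^{op}, h^{op}\cdot k^{op}) = f(g^{op}, h^{op})\cdot f(g^{op}\cdot h^{op}, k^{op}),
\]
the action of $H$ on $K$ being the crossed-module action. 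Throughout I use the dictionary $g^{op}\cdot h^{op} = (hg)^{op}$.

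For $(\urep^{op}, C^{op})$: the twisted morphism law follows by inverting (\ref{cocycdefeq}) and using $C = (C^-)^{-1}$ to get $\urep(g)^{-1}\urep(h)^{-1} = C(h,g)\,\urep(hg)^{-1}$, which, under the definitions of $\urep^{op}$ and $C^{op}$, is exactly the desired identity (with $i$ the inclusion $\Atot \hookrightarrow \Btot$). The non-abelian 2-cocycle identity is a term-by-term rewriting of equation (\ref{cocycleeq}) via the translation $\urep^{op}(g^{op}) = \urep(g)^{-1}$, $C^{op}(g^{op}, h^{op}) = C(h, g)$, $C(kh, g) = C^{op}(g^{op}, h^{op}\cdot k^{op})$ and $C(k, hg) = C^{op}(g^{op}\cdot h^{op}, k^{op})$; the conjugation ${}^{\nu(g)^{-1}}(-)$ then becomes ${}^{\urep^{op}(g^{op})}(-)$. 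The pair $(p^{op}, l^{op})$ is handled identically one level below: by construction $l$ lifts $C$ through $\urep$, and both axioms hold freely in the path groupoid by the same manipulations applied to concatenations of edge-paths rather than to products of unitaries. Note $l(h,g)$ is manifestly a loop based at $v_0$, hence lies in $\pi_1(\BG, v_0) \subset \Aut$, so the cocycle lands in the correct crossed-module top.

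The main obstacle is purely conventional but delicate. One must (a) pin down Dedecker's sidedness convention so that $f$ appears on the correct side of the twisted morphism law---this is exactly why the passage to $\Gpd^{op}$ and the inversion of $\urep$ are forced on us; (b) verify that the ambient conjugation in (\ref{cocycleeq}) really is the crossed-module action $\Phi$, i.e.\ that conjugation by an element of $\P_{\BG}$ with source $v$ and target $w$ sends $\pi_1(\BG, w) \subset \Aut$ to $\pi_1(\BG, v) \subset \Aut$, and analogously $\A_w \to \A_v$ inside $\Atot$; and (c) check that the source and target of $l^{op}(g^{op}, h^{op})$ and of $C^{op}(g^{op}, h^{op})$ match those demanded by $\phi(g^{op}\cdot h^{op})$, which is what ensures that the cocycle values live in the correct component of $\Aut = \amalg_v \pi_1(\BG, v)$ (and respectively inside $\Atot$). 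Once these conventions are locked in, the verification is a line-by-line rewrite of identities already proved.
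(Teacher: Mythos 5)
Your proposal is correct and follows exactly the route the paper intends: the proposition is stated there without an explicit proof precisely because, as you show, the twisted morphism law is the inverted form of equation (\ref{cocycdefeq}) and the non-abelian 2-cocycle identity is a term-by-term rewriting of equation (\ref{cocycleeq}) under the dictionary $\urep^{op}(g^{op})=\urep(g)^{-1}$, $C^{op}(g^{op},h^{op})=C(h,g)$, $g^{op}\cdot h^{op}=(hg)^{op}$ (and the parallel statement for $(p^{op},l^{op})$ one level up in $\P_{\BG}$). Your explicit attention to the sidedness convention, the identification of the conjugation in (\ref{cocycleeq}) with the crossed-module action $\Phi$, and the source/target bookkeeping in $\Aut=\amalg_v\pi_1(\BG,v)$ supplies exactly the details the paper leaves implicit.
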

   
   {\sc Groupoid extension:} By general theory, \cite{Dedecker,Giraud,Moerdijk} the noncommutative cocycle $(p,l)$ gives rise to a groupoid  extension $(\Sigma,b)$ over $\P_{\BG}$
   \begin{equation}
\Sigma:   1\to \Aut\to \hat \Gpd\to \Gpd\to 1 \quad b:\hat \Gpd \to \P_{\BG}
   \end{equation}

   \begin{rmk}
  It is this extension  via $\mrep$ that gives rise to the projective representation  of $\group$ in the commutative case.
  In the noncommutative case, the geometry begins to look like a gerbe geometry. This fits with the non--commutativity being given by
  a 2--form $B$--field. We leave this for further study.
   \end{rmk}
   
  \subsubsection{Categorical description of the Matrix Hamiltonians and re--gauging} 
  The constructions we have presented have a more high--brow explanation. Each spanning tree $\t$ gives
  a functor from $F_{\t}:\P{\BG}\to \P_{\BG/{\t}}$, where $\BG/\t$ is result of contracting $\t$ and is the graph with one vertex $\bullet$ and $k$ loops $l_i$.
Given a functor $(\H,U)$ we can look at all the left Kan--extensions $Lan_{\t}(\H,U):\P_{\BG/\t}\to Hilb$, where $Hilb$ is the category of separable Hilbert spaces. In Particular $\H$ becomes just $\H(\bullet)$.
The action of the $l_i$ is then the action of the diagonal of $\Atot$ obtained by pushing forward $\A_{v_0}$ from a common vertex $v_0$.
In the fully commutative case this coincides with the action of $\A$.

The re--gauging groupoid  compares all the functors obtained by the different left Kan extensions. 

\subsubsection{Re--gaugings Induced by Graph Symmetries}
\label{extendedsympar}
If we have a symmetry, aka.~automorphism, of the graph $\BG$,
then given a fixed choice $\bt$, we can push forward both
these pieces of data to $\bt'$ with $\phi$. This means
that for every $H_{\bt}$ any automorphism $\phi$ gives rise to a re--gauging
of $H_{\bt}$ to $H_{\bt'}$.  That is we have a map
$\group\times \Gpd_1\to \Gpd_1$ where $\Gpd_1$ are the morphisms in $\Gpd$. 

\subsubsection{Lifts to Automorphisms}
One interesting question for any given re--gauging is if there are automorphisms $\psi$ of $\A$ such that
\begin{equation}
\hat \psi(H_{\bt})=U^{\tau}_{v_0v'_0}H_{\bt'}U^{\tau}_{v'_0v_0}
\end{equation}
where, again, $\hat\psi$ is $\psi$ applied to the entries. 
This is the type of enhanced, extended symmetry we will use in the commutative case.

One way such a symmetry can arise is by a re--gauging
induced by an automorphism $\phi$ is $\BG$.
A stricter requirement that is easier to handle is
that not only the matrix coefficients
of the Hamiltonian transform into each other, but rather  already
the weight functions.
This avoids 
dealing with sums of weights. We say a re--gauging induced by an automorphism 
$\phi$ of $\BG$ is 
{\em weight liftable by an automorphism} $\psi$ of $\A$ if 
$\psi(\wt(\vec{e}))=\wt'(\phi(\vec{e}))$, where $\wt'$  is the re--gauged
weight function for the pushed forward spanning tree. 

\begin{thm}
\label{autothm}
Given an automorphism $\phi$ of $\BG$ there is at most one weight lift by an automorphism $\psi$ of the re--gauging induced by $\phi$.
On the generators $\wt(\vec{e})$, $e$ not a spanning tree edge, the putative map is fixed 
by the condition $\psi(\wt(\vec{e})):=\wt'(\phi(\vec{e}))$, where $\wt'$ is the re--gauged
weight function.

Furthermore, the  $\psi(\wt(\vec{e}))$ again generate $\A$ and hence whether  $\psi$ indeed defines  an automorphism
only needs to be checked on the generators $\wt(\vec{e})$.

Lastly, $\psi$ is induced by a base change of $\pi_1(\BG)$.
\end{thm}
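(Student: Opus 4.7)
The plan is to treat the three assertions---uniqueness, generation of the images, and the base--change interpretation---in sequence, each leaning on the identification from \S\ref{pipar} between non--tree edges and a basis of $\pi_1(\BG,v_0)$.

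For uniqueness, I would invoke the minimality hypothesis of \S\ref{graphhamsec}: the elements $\wt(\vec{e})$ generate $\A$, and since tree edges contribute $\wt(\vec{e})=1$ by compatibility, all nontrivial information lies in the $b_1$ unitaries $\wt(\vec{e})$ indexed by non--tree edges, which by \S\ref{pipar} correspond to a symmetric set of generators of $\pi_1(\BG,v_0)$. Any $C^*$--automorphism $\psi$ of $\A$ is therefore determined by its values on this finite set, and the prescription $\psi(\wt(\vec{e})):=\wt'(\phi(\vec{e}))$ pins those values down uniquely, giving at most one candidate.

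For the second assertion I would observe that $\phi$ restricts to a bijection between non--tree edges of $\t$ and non--tree edges of the pushed--forward tree $\t'=\phi(\t)$. By Proposition \ref{regaugethm} the re--gauged weights $\wt'$ compatible with $\t'$ differ from the original weights only by conjugation and multiplication with elements of $U(\A)$, so they remain inside $\A$; applying the minimality hypothesis now to the pair $(\t',\wt')$ shows that the values $\wt'(\phi(\vec{e}))$ exhaust a generating set of $\A$. This immediately yields the final assertion of the theorem: any $C^*$--morphism of $\A$ is determined by its behavior on a generating set provided the defining relations are preserved, so whether $\psi$ actually extends to an automorphism reduces to checking these relations on the $\wt(\vec{e})$.

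For the base--change interpretation, I would view $\wt$ as a unitary representation $\pi_1(\BG,v_0)\to U(\A)$ whose image generates $\A$, in accordance with \S\ref{pipar}. The graph automorphism $\phi$ yields an isomorphism $\phi_*:\pi_1(\BG,v_0)\to \pi_1(\BG,v_0')$, and conjugating by the edge--path $\gamma^{\t}_{v_0v_0'}$ along $\t$ produces an automorphism $\tilde\phi$ of $\pi_1(\BG,v_0)$. Unwinding the correspondence between non--tree edges and loop generators and the formula for $\wt'$ coming from Proposition \ref{regaugethm} shows that $\psi$ agrees on generators with the endomorphism of $\A$ induced by $\tilde\phi$ through $\wt$, which is the sought change of basis. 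The main obstacle I anticipate is controlling the ambiguity in the auxiliary path from $v_0'$ to $v_0$: different choices modify $\tilde\phi$ by inner automorphisms of $\pi_1(\BG,v_0)$, and one must verify that this ambiguity matches exactly the re--gauging freedom absorbed by the cocycle $C$ of equation (\ref{cocycdefeq}), so that the induced action on the generators of $\A$ is unambiguous even though the underlying base change of $\pi_1$ is only defined up to inner conjugation.
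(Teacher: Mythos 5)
Your proposal is correct and follows essentially the same route as the paper: uniqueness from minimality, and the base--change interpretation via the isomorphism $P:\pi_1(\BG,v_0')\to\pi_1(\BG,v_0)$ given by conjugation with the tree path. One remark on ordering: your second paragraph justifies generation by ``applying the minimality hypothesis to the pair $(\t',\wt')$,'' but minimality is only postulated for the original weight function, so that step is not self--supporting; the paper avoids this by proving the base--change statement \emph{first} and then deducing generation from the fact that $P$ carries the basis $\{l^{\tau'}(\phi(\vec{e}))\}$ of $\pi_1(\BG,v_0')$ to a generating set of $\pi_1(\BG,v_0)$ --- which is exactly the content of your third paragraph, so your argument is complete once the paragraphs are reordered. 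Your worry about the ambiguity of the path from $v_0'$ to $v_0$ is moot, since the construction uses the canonical shortest path inside the spanning tree $\tau$.
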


\begin{proof}
Let $\wt'$ be the re--gauged weights after moving from $\bt$ to $\bt'$.
If an automorphism $\psi$ of $\A$ that lifts $\phi$ exists, 
then it satisfies $\wt'(\phi(\vec{e}))=\psi(\wt(\vec{e}))$. After
fixing an orientation for each edge the $\wt(\vec{e})$ generate, we see that
the morphism is already fixed, since by assumptions the $\wt(\vec{e})$ generate $\A$.

In order to show that the $\psi(\wt(\vec{e}))$ are generators, we will prove the last statement first.
As discussed in \S\ref{pipar} $\wt$ gives a representation $\rho$ of $\pi_1(\BG,v_0)$ and $\wt'$ gives a representation $\rho'$ of $\pi_1(\BG,v_0')$
if $v_0$ is the root of $\tau$ and $v'_0=\phi(v_0)$ is the root of $\tau'$, the pushed forward spanning tree.
In $\tau$ there is a canonical shortest path $p^{\tau}_{v_0v'_0}$ from $v'_0$ to $v_0$. Conjugating by this path
gives an isomorphism $P:\pi_1(\BG,v_0')\to \pi_1(\BG,v_0)$. This is in essence the definition of the path groupoid of $\BG$.
Let $l^{\tau}(\vec{e})$ be the loop associated to $\vec{e}$ by using $\tau$ as a spanning tree, see \S\ref{pipar},
then $\wt(\vec{e})=\rho(l^{\tau}(\vec{e}))$. It follows from the definition of the re--gauging that 
$$\psi(\rho(l^{\tau}(\vec{e})))=\psi(\wt(\vec{e}))=wt'(\phi(\vec{e}))=\rho'(l^{\tau'}(\phi(\vec{e})))=\rho(P(l^{\tau'}(\phi(\vec{e}))))$$
so that  $\psi$ is induced by the chance of basis $l^{\tau}(\vec{e})\to P(l^{\tau'}(\phi(\vec{e})))$ in $\pi_1(\BG,v_0)$.
From this it follows that the  $\psi(\wt(\vec{e}))$ generate.
\end{proof}

\begin{cor}
\label{nondegcor}
If the groupoid representation is non--degenerate, so that 
$\A$ is generated by the $\wt(\vec{e})$ and each non--spanning--tree edge gives a linearly
independent generator, then the morphism $\Psi$ above is well--defined as a linear morphism.

If there are no relations among the generators, e.g.\ in the case $\A=\T^n$ the commutative algebra of the torus,
then every automorphism is weight liftable, i.e.\ $\Psi$ from above is well--defined as an algebra homomorphism.
\qed
\end{cor}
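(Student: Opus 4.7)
The plan is to read both parts off Theorem \ref{autothm}, which forces $\Psi$ on the generators by the formula $\Psi(\wt(\vec{e}))=\wt'(\phi(\vec{e}))$ for every non-spanning-tree edge $\vec{e}$ and identifies it abstractly as being induced by a change of basis of $\pi_1(\BG,v_0)$.

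For the first claim, I would verify that non-degeneracy is exactly the condition needed to promote this assignment on generators to a well-defined linear map on their linear span. The only potential obstruction is a non-trivial linear relation $\sum_i c_i \wt(\vec{e}_i)=0$ in $\A$, which would have to force the corresponding relation among the images; by definition of non-degeneracy no such non-trivial relation exists, so linear extension is unambiguous. The image values $\wt'(\phi(\vec{e}))\in \A$ are themselves well-defined elements of $\A$, as one reads off from the iterative construction of the re-gauged weight function in the remark following Proposition \ref{regaugethm}.

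For the second claim, I would extend $\Psi$ multiplicatively from generators to the full algebra $\A$ and verify that every defining relation is preserved. In the torus case $\A=\T^n$, the relations to check are unitarity $uu^*=u^*u=1$, commutativity $uv=vu$, and the inverse-edge compatibility $\Psi(\wt(\cev{e}))=\Psi(\wt(\vec{e}))^*$. Each is automatic from the construction: the re-gauged weights $\wt'(\phi(\vec{e}))$ are unitary as products of unitaries, they commute since $\A$ is itself commutative, and the inverse-edge symmetry is inherited from the corresponding property of $\wt'$. In the absence of further relations among the generators, $\Psi$ then extends uniquely to an algebra homomorphism.

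The main obstacle I anticipate is pinpointing exactly which relations must be checked. In the torus (or more generally relation-free) case the three listed above are all that matter, so the verification is genuinely routine. In a noncommutative setting such as $\A=\TTheta^n$, one would additionally need $\Psi$ to preserve the $\Theta$-commutation relations, which is no longer automatic and would require the change of basis of $\pi_1(\BG,v_0)$ coming from $\phi$ to respect the symplectic data encoded in $\Theta$. This explains why the corollary is stated in the relation-free/commutative torus regime rather than for general $\A$.
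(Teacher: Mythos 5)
Your proposal is correct and follows essentially the same route as the paper, which states the corollary with no separate proof precisely because it is the routine extension-from-generators argument applied to the map $\psi(\wt(\vec{e}))=\wt'(\phi(\vec{e}))$ fixed in Theorem \ref{autothm}: linear independence gives the linear extension, and in the relation-free torus case the universal property of $n$ commuting unitaries gives the algebra homomorphism. The only point worth adding is that $\Psi$ is in fact an \emph{automorphism} (as required by the definition of weight liftability), which follows since the images generate $\A$ by Theorem \ref{autothm} and the same construction for $\phi^{-1}$ supplies the inverse.
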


We will use the corollary in \S\ref{compar} to define the enhanced symmetry groups in the commutative toric non--degenerate case.

\begin{rmk}
\label{symrmk}
These types of symmetries might also help to explain the somewhat mysterious approximate symmetry between the noncommutative
and the commutative case found in \cite{kkwk2}. Here the symmetry is between two loci in the base of the variations. 
In the commutative case this is the base $X$ for the family of
Hamiltonians as discussed above and in the noncommutative case is the space parameterizing the background magnetic field $B$.
The two loci are the locus of degenerate Eigenvalues on the commutative side and the locus of values of $B$ where $\B$ is not
the full matrix algebra. From the examples PDG and honeycomb, these two loci have exactly the same top dimension and
there are further characteristic features which they have in common.
These considerations would lead us too far astray in the present context, but we plan to return to them in a future paper.

\end{rmk}

\subsection{Enhanced symmetries in the commutative case}
\label{compar}

We will concentrate on the commutative case in the following.  One physical feature that makes the noncommutative theory more complicated is that  conjugating $H$ with elements from $\A$ usually does not leave it invariant. This is of course the starting point for considering  the $C^*$--algebra $\B$ which
contains all these conjugates.
\subsubsection{Extension}
\label{extensionpar}
In the commutative case, $U(\Aut)$ is a commutative group and the 2--cocycle
defines a central extension $\tilde \Gpd$ of $\Gpd$ by $U(\Aut)$. We can consider the action of this central extension, since
the action of $U(\Aut)$ commutes with the Hamiltonians, permutations and the re-gaugings in this case.
If we are moreover in the fully commutative case, then by using the diagonal embedding of $\A$ we can even
make the cocycle take values in $U(\A)$ and hence obtain the central extension.

\begin{equation}
1\to U(\A)\to \tilde\Gpd \to \Gpd\to 1
\end{equation}
Then $\rho$ does give a groupoid representation of $\tilde \Gpd$.

\begin{rmk}
There is a nice geometric interpretation of this in the case of wire networks. Here the group $U(\A)$ corresponds
to translations along the lattice $L$. One can identify the vertices of $\BG$ with the elements in a chosen
primitive cell and likewise one can arrange the spanning tree edges to be inside this cell. When we are re--gauging, we move the base point along the spanning tree edges. After doing this several times the new root can lie outside the original 
primitive cell. The co--cycle then measures the displacement of the new cell relative to the old cell in terms of 
an element $\lambda\in L$, more precisely it is just $U_{\lambda}$. 
\end{rmk}

\subsubsection{Enhanced symmetry group}

In order to find degeneracies in the spectrum, we use the characters and then look for fixed points
under the induced groupoid action.
Using the language of \S\ref{comgeopar}, given a point $\chi \in \A$ we get a map $\hat\chi:Ham_0\to M_k(\C)$. There is then an induced
action of the groupoid on $\hat\chi(Ham_0)$, by pushing forward with this character.
It can now happen that  $\hat\chi(H_{\bt})=\hat\chi(H_{\bt'})$, that is $H_{\bt}(t)=H_{\bt'}(t)$, for the point $t\in X$ corresponding to $\chi$.

For each element
$H(t)\in \hat\chi(Ham_0)$, we get its stabilizer group $\stab(H(t))$ under the induced groupoid action.
This is the image of the transitive action of the groupoid on the fiber of $\hat \chi$ over $H(t)$.
We can identify $\stab(H(t))$ with the image of that subgroupoid.
If this group is not trivial, which means that the fiber is not just a point, we call this
group the enhanced symmetry group of $H(t)$. It is realized by re--gaugings, that is conjugation by
specific matrices which form a projective representation of the stabilizer group as we presently discuss.

\subsubsection{Super--selection rules, Projective Representation and Degeneracies}

If $\stab(H)\neq 1$ then this means that 
 the set of all matrices $\hat \chi\rho(g)$
for $g\in \stab(H(t))$, where we 
identified $g$ with its defining element in $\Gpd$,
 all commute with the Hamiltonian $H(t)$ and
 hence each one and all of them together give super--selection rules.
This of course is already a great help in finding the spectrum.

Since $\rho$ is only a groupoid representation of $\tilde \Gpd$, 
we get that $\hat\chi\rho$
is a  representation of a an extension of  $\stab(H(t))$.
 If we are in the fully commutative case
this extension is central and gives rise to a 
{\em projective} representation of $\stab(H(t))$.
\begin{equation}
1\to U(1) \to \tilde \stab(H(t))\to \stab(H(t))\to 1
\end{equation}
Here we pulled back with the diagonal embedding, i.~e. $U(1)$ is embedded as scalars, viz.\ diagonal matrices.

In order to apply the general arguments of representation theory, we will be interested in the class of this extension. These extensions are classified
up to isomorphism by $H^2(\stab(H(t)),U(1))$ \cite{Schur10,kap}.

We give a brief definition of this cohomology group,
as it is important
for our calculations (see e.g.~\cite{brown}).
Let $G$ be a group and $A$ be an Abelian group, which we also write multiplicatively\footnote{We  consider $A$ to be a $G$--module with the trivial action.}.
Set $C^i(G,A):=Map(G^{\times i},U(1))$ these are the $i$--th cochains.
There is a general differential $d:C^i\to C^{i+1}$ with $d^2=0$. We will need the formulas for it on 1-- and 2-- cochains. If $\lambda \in C^1(G,A)$ then $d\lambda(g,h)=\lambda(g)\lambda(h)\lambda(gh)^{-1}$
and if $c \in C^2(G,A)$ then $dc(g,h,k)=c(h,k)c^{-1}(gh,k)c(g,hk)c(g,h)^{-1}$.
Set $Z^2(G,A):=ker(d:C^2(G,A)\to C^3(G,A))$ and $B^2(G,A):= Im(d:C^1(G,A)\to C^2(G,A))$.
Notice that an element $c\in C^2$ is in $Z^2$ precisely means that $c$ satisfies the cocycle condition (\ref{cocycleeq}) in the Abelian case, where the conjugation action is trivial. Now $B^2(G,A)\subset Z^2(G,A)$ and $H^2(G,A):=Z^2(G,A)/B^2(G,A)$.

What this means is that we can move to an isomorphic extension by using a rescaling $\lambda\in C^1(\stab(H(t)),U(1))$. Another interesting concrete question is if a given homology class $[c]$ can
be represented by a cocycle in a subgroup of $A$. This is especially interesting if the subgroup is finite. In our concrete calculations for the Gyroid, we will use for instance
$\Z/2\Z$ and this will lead us to consider double covers.

In general if we identify that the projective action of $\stab(H(t))$ is isomorphic to an action
of a finite group extension $\tilde \stab(H(t))$, then we can use this representation to
decompose $\C^k$ into its isotypical decompositions with respect to this group action.
If the group is non--Abelian, then there is a chance that some of the irreducible representations
in the decomposition are higher dimensional,  which implies degeneracies of the order of these dimensions. Again this is present for the Gyroid.

\subsubsection{Geometric lift of the groupoid action}
In order to understand the (projective) group action,
geometrically in the commutative case,
 one lifts the action on the Hamiltonians
to an action on the underlying geometric space. We will now for concreteness fix $\A=\T^n$,
that is the groupoid representation is commutative, toric non--degenerate,
as is the case in all crystal examples we  consider: PDG, Bravais and Honeycomb.

Finding lifts then means that one considers the commutative diagram

\begin{equation}
\label{actioneq}
\xymatrix{
T^n\ar[rr]^{H_{\bt}} \ar@{.>}[d]_{\Psi_{\bt'}^{\bt}}
\ar[rrd]^{ H_{\bt'}}
&&\hat\chi (Ham_0)\ar[d]^{\Phi_{\bt'}^{\bt}}\\
T^n\ar[rr]_{ H_{\bt}}& &\hat\chi(Ham_0)
}
\end{equation}
where the dotted morphism is the lift to be constructed
and $H_{\bt}$ is the map $t\mapsto H_{\bt}(t):=\hat\chi(t)$ if $\chi$ is the character
corresponding to $t$.

The existence of these lifts is not guaranteed in general, and indeed there are examples
of re--gaugings that cannot be lifted. A non--liftable example can e.g.\ be produced from
the cube graph obtained from the Gyroid graph by quotienting out by the the simple cubic lattice,
see \cite{kkwk}. We will show that all lifts stemming from automorphisms of the underlying graph do lift.

Looking
at the diagram (\ref{actioneq}),
one consequence of this action is that it lets us pinpoint Hamiltonians with enhanced symmetry group.
Using Corollary \ref{nondegcor} and translating it to the geometric side, we obtain

\begin{prop}
\label{nondegprop}
Let $(\bar \Gamma,\wt)$ be a toric non--degenerate weighted graph. 
In the commutative case the automorphism group of $\BG$ lifts via 
the gauging action to an automorphism group of $\T^n$. 
That is we get a morphism $Aut(\BG)\to Aut(\T^n)$.

 If a point $t\in T^n$ is a fixed point of a lift of an element $g\in \Gpd$, then,
the re--gauging is an enhanced symmetry for the corresponding Hamiltonian,
 that is $\hat\chi(\rho(g))$ commutes with the Hamiltonian $H_{\bt}(t)$.
\qed
\end{prop}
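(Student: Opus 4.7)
The plan is to derive both parts of Proposition~\ref{nondegprop} from Corollary~\ref{nondegcor} together with Proposition~\ref{regaugethm} evaluated at the point $t$. For the first part, the toric non--degenerate hypothesis says $\A\cong C(\T^n)$ and that the generators $\wt(\vec e)$ indexed by the non--spanning--tree edges are algebraically independent, so the second clause of Corollary~\ref{nondegcor} applies: every $\phi\in\group$ yields a $C^*$--automorphism $\psi$ of $\A$ characterized by $\psi(\wt(\vec e))=\wt'(\phi(\vec e))$. Gel'fand--Naimark duality then converts $\psi$ into a homeomorphism $\Psi\colon\T^n\to\T^n$. Functoriality of $\phi\mapsto\psi$ follows from the uniqueness clause of Theorem~\ref{autothm}: the composite $\psi_{\phi_1}\psi_{\phi_2}$ is a weight lift of $\phi_1\phi_2$ and hence coincides with $\psi_{\phi_1\phi_2}$. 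Dualizing yields the homomorphism $\group\to Aut(\T^n)$.

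Next I would verify that this $\Psi$ is indeed a lift in the sense of diagram~(\ref{actioneq}), namely that $H_{\bt}(\Psi(t))=H_{\bt'}(t)$. The calculation is purely entry--wise: Gel'fand--Naimark gives $\wt(\vec e)(\Psi(t))=\psi(\wt(\vec e))(t)=\wt'(\phi(\vec e))(t)$, and since the order on $\bt'$ is pushed forward from $\bt$, the $i$--th vertex of $\bt'$ is $\phi(v_i)$. Thus the $(i,j)$--entry of $H_{\bt'}(t)$, which sums $\wt'(\vec{e'})(t)$ over directed edges from $\phi(v_i)$ to $\phi(v_j)$, is converted via the edge bijection $\vec e\mapsto\phi(\vec e)$ and the above identity into the sum of $\wt(\vec e)(\Psi(t))$ over edges from $v_i$ to $v_j$, i.e.\ into $(H_{\bt}(\Psi(t)))_{ij}$.

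With the lift established, the fixed--point assertion is immediate. If $\Psi(t)=t$ then $H_{\bt'}(t)=H_{\bt}(\Psi(t))=H_{\bt}(t)$. Evaluating Proposition~\ref{regaugethm} at $t$, the factors $U^{\tau}_{v_0v_0'}(t)$ and $U^{\tau}_{v_0'v_0}(t)$ are reciprocal elements of $U(1)$, so the identity collapses to $M(t)H_{\bt}(t)M(t)^*=H_{\bt'}(t)=H_{\bt}(t)$; that is, $M(t)$ commutes with $H_{\bt}(t)$. Since $\hat\chi(\rho(g))$ differs from $M(t)$ only by the scalar $\chi(U^{\tau}_{v_0'v_0})$, it commutes with $H_{\bt}(t)$ as well.

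The main obstacle is the bookkeeping in the second paragraph: one must carefully distinguish the naive pushed--forward weight function (which generally fails condition \ref{normcond}) from the re--gauged $\wt'$ that satisfies it, and keep track of how $\phi$ permutes the enumeration of vertices, so that the algebraic relation $\psi(\wt(\vec e))=\wt'(\phi(\vec e))$ translates cleanly into the matrix equality $H_{\bt'}(t)=H_{\bt}(\Psi(t))$. Once this is sorted out, everything else is essentially scalar arithmetic in $U(1)$.
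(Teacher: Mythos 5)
Your proposal is correct and takes essentially the same route as the paper, which presents the proposition as an immediate consequence of Corollary \ref{nondegcor} (weight--liftability of every graph automorphism in the toric non--degenerate commutative case) ``translated to the geometric side'' via Gel'fand--Naimark duality, combined with the conjugation identity of Proposition \ref{regaugethm}. You have merely made explicit the bookkeeping the paper leaves tacit: functoriality of $\phi\mapsto\psi$ via the uniqueness clause of Theorem \ref{autothm}, the entrywise verification that $H_{\bt}(\Psi(t))=H_{\bt'}(t)$, and the observation that the scalars $\chi(U^{\tau}_{v_0v_0'})$ drop out at a character.
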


Summarizing these results:

\begin{thm}
\label{nondegthm}
If $(\BG,\wt)$ is commutative and  toric non--degenerate, then  
a  stabilizer sub--group $G_t$ of $t\in T^n$ 
under the induced action
of $Aut(\BG)$ on $T^n$
leads to an
enhanced symmetry group $\stab(H(t))$ for the Hamiltonian $H(t)$.
This group also has a projective 
representation via the matrices $\hat\chi(\rho(g))$.
\qed
\end{thm}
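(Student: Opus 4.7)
The plan is to assemble three pieces already established and observe that together they give the statement. Fix a base rooted ordered spanning tree $\bt$, and for each $g\in Aut(\BG)$ let $g:\bt\to\bt'$ denote the induced morphism in $\Gpd$ (push-forward of $\bt$). The goal is to produce, for every $g\in G_t$, a matrix $\hat\chi(\rho(g))\in M_k(\C)$ commuting with $H(t)$, and to show that the assignment $g\mapsto \hat\chi(\rho(g))$ satisfies the projective cocycle identity.

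First I would invoke Corollary \ref{nondegcor} together with Proposition \ref{nondegprop}: toric non-degeneracy means that each generator of $\A=\T^n$ is a free generator, so the map $\wt(\vec{e})\mapsto \wt'(\phi(\vec{e}))$ of Theorem \ref{autothm} extends to a bona fide algebra automorphism $\psi_g$ of $\T^n$, and Gel'fand duality turns it into a homeomorphism $\Psi_g:T^n\to T^n$ making diagram (\ref{actioneq}) commute. Since $t\in G_t$-fixes, $\Psi_g(t)=t$ for all $g\in G_t$; equivalently, $\hat\chi\circ\hat\psi_g=\hat\chi$ on matrix entries, where $\chi$ is the character of $t$.

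Second I would evaluate the gauging identity of Theorem \ref{regaugethm} at $t$. After pulling everything back by $\alpha$ as in Corollary \ref{commcor}, that theorem reads $M_g H_{\bt} M_g^{*}=\hat\psi_g(H_{\bt})$ in $M_k(\A)$ (the scalar factor $U^{\tau}_{v_0v'_0}$ is absorbed into $\rho$ and becomes trivial in the commutative evaluation). Applying $\hat\chi$ and using $\hat\chi\circ\hat\psi_g=\hat\chi$ gives
\begin{equation}
\hat\chi(\rho(g))\,H(t)\,\hat\chi(\rho(g))^{*}=H(t),
\end{equation}
so each $\hat\chi(\rho(g))$ lies in $\stab(H(t))$. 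This defines the enhanced symmetry group as the image of $G_t$ under $g\mapsto \hat\chi(\rho(g))$, yielding the promised map $G_t\to \stab(H(t))$.

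Third, for the projective structure I would feed the stabilizer elements into Proposition \ref{mainprop}: in $M_k(\A)$ we have $M_h\circ_{\bt}M_g=C(h,g)M_{hg}$ with $C(h,g)\in U(\A)$ a scalar (diagonal) matrix in the fully commutative case (\S\ref{extensionpar}). Evaluating under $\hat\chi$ turns $C(h,g)$ into $\hat\chi(C(h,g))\in U(1)$ and yields the projective relation
\begin{equation}
\hat\chi(\rho(h))\,\hat\chi(\rho(g))=\hat\chi(C(h,g))\,\hat\chi(\rho(hg)),
\end{equation}
which is precisely the definition of a projective representation of $\stab(H(t))$, or equivalently a genuine representation of the central extension $\widetilde{\stab(H(t))}$ constructed in \S\ref{extensionpar}. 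The cocycle condition (\ref{cocycleeq}) for $C$, specialized to the commutative Abelian case, is exactly the $2$-cocycle condition classifying such extensions by $H^{2}(\stab(H(t)),U(1))$.

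The one place where care is needed, and the main conceptual obstacle, is ensuring that the algebraic re-gauging action on matrix Hamiltonians and the geometric $Aut(\BG)$-action on $T^n$ really match up under Gel'fand duality. This is precisely where \emph{toric non-degeneracy} is used in an essential way: without it, the putative lift $\psi_g$ of Theorem \ref{autothm} is only defined linearly on generators and need not respect multiplicative relations, so one could not identify $\hat\chi\circ\hat\psi_g$ with pullback along a homeomorphism of $T^n$, and the fixed-point condition $\Psi_g(t)=t$ would lose its meaning. Once that identification is in place the remainder of the argument is a bookkeeping exercise stringing together Theorem \ref{regaugethm}, Proposition \ref{mainprop}, and the definition of the character $\hat\chi$.
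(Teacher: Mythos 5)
Your proposal is correct and follows essentially the same route as the paper, which presents Theorem \ref{nondegthm} as a summary (``Summarizing these results'') of Corollary \ref{nondegcor}, Proposition \ref{nondegprop}, Proposition \ref{regaugethm}, and the cocycle structure of Proposition \ref{mainprop} and \S\ref{extensionpar}, exactly the pieces you assemble. The only thing to note is that, as in the paper's own discussion, the $U(1)$-valued central extension strictly uses the \emph{fully} commutative setting (Corollary \ref{commcor}), a hypothesis the theorem statement elides; your proof inherits this harmless looseness from the paper.
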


We can exploit the representation theory of this group to get 
information about degeneracies.

\begin{ex}
\label{symex}
For commutative toric non--degenerate groupoid representations of
 symmetric graphs the re--gaugings by re--orderings are always representable
via an automorphism of the graph.
 If $\sigma\in \SS_{k-1}$ permutes
the vertices of the spanning tree leaving the root fixed, then the re--gauging
lifts as the reordering of the generators and possibly taking $*$ of them. 
The matrices are just the usual permutation matrices
of $\SS_{k-1}\subset \SS_{k}$ acting on the last $k-1$ copies of $\C$ in $\C^k$.
\end{ex}

\begin{rmk}
In the commutative case, the representation of $\pi_1(\BG,v_0)$ factors through its Abelianization  $H_1(\BG)$.
\end{rmk}

\section{Calculations and Results for Wire Networks}
\label{calcpar}
In this paragraph, we perform the calculation for the PDG and honeycomb graphs of Figure \ref{graphsfig}.
These correspond to wire networks as reviewed in 
\S\ref{wiregeosec}. In all these situations 
Theorem \ref{nondegthm} applies.
The upshot of the following calculations together 
with the analysis of \cite{kkwkdirac} is:

\begin{thm}
In all the examples  PDG and honeycomb, all the fixed points come from fixed points of the semi--classical action. Moreover the fixed points, stabilizer groups, their extensions
and the decomposition into irreps for the case of the Gyroid are given in Table 
\ref{gyroidtable}.
\end{thm}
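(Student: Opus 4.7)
The plan is to proceed case-by-case for each of the four graphs P, D, G, and honeycomb and apply Theorem \ref{nondegthm} to the action of $\group$ on $\T^n$. The first step in each case is to compute the explicit action induced by each graph automorphism $\phi$ on the generators $\wt(\vec e)$ of $\A \simeq \T^n$. By Theorem \ref{autothm}, this action is determined by pushing forward the spanning tree $\tau$ to $\tau' = \phi(\tau)$, re-gauging the weight function to $\wt'$ (which amounts to inserting the diagonal unitary dictated by the tree-path from $v_0'$ back to $v_0$), and then setting $\psi(\wt(\vec e)) = \wt'(\phi(\vec e))$. In our commutative toric non-degenerate setting, $\psi$ descends to an automorphism of $\T^n$, and I would tabulate the resulting affine transformation on each coordinate for the generators listed in Figure \ref{graphsfig}.

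Next I would enumerate the fixed points. Since the induced action on $\T^n$ is by coordinate permutations combined with possible sign flips (inversions coming from orientation reversals on the non-spanning-tree edges), the fixed locus of each subgroup $H \subseteq \group$ is a sub-torus cut out by explicit equations of the form $t_i = \pm t_j$ or $2t_i \equiv 0$ modulo $2\pi\Z$. Running through the subgroup lattice of $\group$ in each case yields a finite list of maximal-stabilizer points together with their stabilizer subgroups $G_t$. I would then cross-reference this list against the degenerate-spectrum points classified analytically in \cite{kkwkdirac}, verifying the first clause of the theorem: the two sets coincide, so no enhanced-symmetry point is missed, and every fixed point is semi-classical.

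For the third step I construct the projective representation $\hat\chi\circ \mrep$ of each $G_t$ using the matrices $M_g$ from Proposition \ref{regaugethm}. The explicit iterative rule for the diagonal factor $\Phi$ combined with the permutation $M_\sigma$ produces a $k\times k$ matrix whose entries, after evaluating at $t$, are roots of unity. Computing the 2-cocycle $C(h,g)$ directly from (\ref{cocycdefeq}) then gives the class in $H^2(G_t, U(1))$. For P, D, and honeycomb the stabilizers are abelian or direct products and $H^2$ is small, so the cocycle is cohomologically trivial and one obtains only super-selection rules; for the Gyroid at $(0,0,0)$ and $(\pi,\pi,\pi)$ I would exhibit an explicit coboundary $\lambda \in C^1(\SS_4, U(1))$ with $d\lambda = C$, yielding a bona fide $\SS_4$ representation that decomposes into the trivial and the $3$-dimensional standard irrep via the $\SS_4$ character table.

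The hardest step will be the Gyroid at $(\pm\pi/2,\pm\pi/2,\pm\pi/2)$ with stabilizer $A_4$: here I expect $C$ to represent the unique nontrivial class in $H^2(A_4, U(1)) \simeq \Z/2\Z$. To verify this I would restrict the cocycle to the Klein four subgroup $V_4 \subset A_4$, where the obstruction is detected, and check that two commuting generators of $V_4$ lift to matrices $M_g, M_h$ satisfying the anticommutation relation $M_g M_h = -M_h M_g$ in $M_4(\C)$; this relation cannot come from a genuine representation of $V_4$, so the extension is nontrivial. Once nontriviality is established, the extension is forced to be the binary tetrahedral group $2A_4 \simeq SL(2,3)$, whose character table contains exactly two faithful $2$-dimensional irreducibles, producing the observed $(2,2)$ degeneracy. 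Assembling all these stabilizers, extensions, and irreducible decompositions into Table \ref{gyroidtable} completes the proof.
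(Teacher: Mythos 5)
Your overall strategy is exactly the paper's: compute the induced action of each graph automorphism on $\T^n$ via Theorem \ref{autothm}, enumerate fixed points and stabilizers, cross-check against the analytic classification of \cite{kkwkdirac}, and then analyze the projective representations built from the matrices of Proposition \ref{regaugethm} point by point, including the explicit trivializing $1$-cochain at $(\pi,\pi,\pi)$. The one place you genuinely diverge is the identification of the $A_4$ extension at $(\pm\pi/2,\pm\pi/2,\pm\pi/2)$: you propose detecting nontriviality of the class in $H^2(A_4,U(1))\simeq\Z/2\Z$ by restricting to the Klein four subgroup and exhibiting anticommuting lifts, whereas the paper rescales by an explicit $1$-cochain $\lambda$ and verifies directly that the rescaled matrices satisfy the defining relations $s^3=t^3=(st)^2$ of $2A_4$ on the generators $s=z\widehat{(123)}$, $t=z\widehat{(234)}$. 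Your route is a clean and standard way to certify nontriviality (and indeed the relation $\rho_{(12)(34)}\rho_{(13)(24)}=-i\,\rho_{(14)(23)}$ computed in the paper is of exactly this flavor); the paper's route has the advantage of producing the concrete finite group and generators needed for the character-table computation that follows.

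Two concrete slips to repair. First, $2A_4$ has \emph{three} two-dimensional irreducibles $\chi_4,\chi_5,\chi_6$, all of which are faithful, so your claim that the character table "contains exactly two faithful $2$-dimensional irreducibles" does not by itself pin down the decomposition; you still need the trace computation the paper performs ($\mathrm{tr}(-1)=-4$ kills $\chi_1,\chi_2,\chi_3,\chi_7$, and $\mathrm{tr}(s)=-1$ then forces $a_4=0$, $a_5=a_6=1$, giving $\chi_5\oplus\chi_6$). The $(2,2)$ degeneracy itself survives either way, since $\mathrm{tr}(-1)=-4$ alone forces all constituents to be two-dimensional, but the table entry asserts the specific decomposition. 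Second, for the D and honeycomb cases the inference "the stabilizers are abelian or direct products and $H^2$ is small, so the cocycle is cohomologically trivial" is not valid as stated (e.g.\ $H^2(\Z/2\Z\times\SS_3,U(1))\simeq\Z/2\Z$ is nonzero); what actually happens, and what the paper verifies by writing down the $1\times 1$ and $2\times 2$ matrices explicitly, is that the representations occurring at the relevant fixed points are abelian and yield only super--selection rules, from which the degeneracies (e.g.\ $H=0$ on the circles $a=-1$, $b=-c$ and at the graphene Dirac points) are read off directly.
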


For the calculations, we note that we are in the fully commutative case and hence Corollary  \ref{commcor} applies.
Furthermore the graphs  all have transitive symmetry groups, so that we only have to calculate for one source $\bt$
and can then transport the results by push--forward to any other.
\subsection{Gyroid}

The graph $\BG$ in the Gyroid case is the full square. It has symmetry group $\SS_4$.
With the Gyroid weights, the graph is faithful and hence the action
can be lifted to an action on the torus.
 It acts transitively on all ordered spanning trees. Such a spanning tree is fixed by specifying a root and the order.
 The subgroup of $\SS_3$ acts transitively
 on all orders. The matrices of this subgroup action are just the permutation matrices acting
 on the last $3$ copies of $\C$ in $\C^4$ and
 the lift of the $\SS_3$ action on the generators $A,B,C$ of $\T^3$ is given by the permutation action.

We fix an initial rooted spanning tree and order as in Figure (\ref{t12fig}).

\subsubsection{Action on $T^n$}
The action of $\SS_4$ on $T^3$ is fixed once we know the action of the generators $(12),(23)$ and $(34)$.

The action of $(23)$ is graphically calculated
in Figure \ref{t23fig}, from which one reads off
$\Psi((23))(A,B,C)=(A^*,C^*,B^*)$. Here $(A,B,C)$ is the notation for the initially chosen basis of $\T^3$.

In the graphical calculation, we first write down the graph together with the
initial spanning tree and order. We then push--forward the spanning tree and 
the order. For this we keep the vertices and edges as well as the weights fixed.
We then (if necessary) give the re--gauging parameters by writing
them next to the respective vertices and (if necessary) perform the re-gauging.
Finally we move the vertices and edges, so that they coincide with their pre--images to read off the morphism on the generators given by Theorems \ref{autothm}
and \ref{nondegthm}.

\begin{figure}
\includegraphics[width=.9\textwidth]{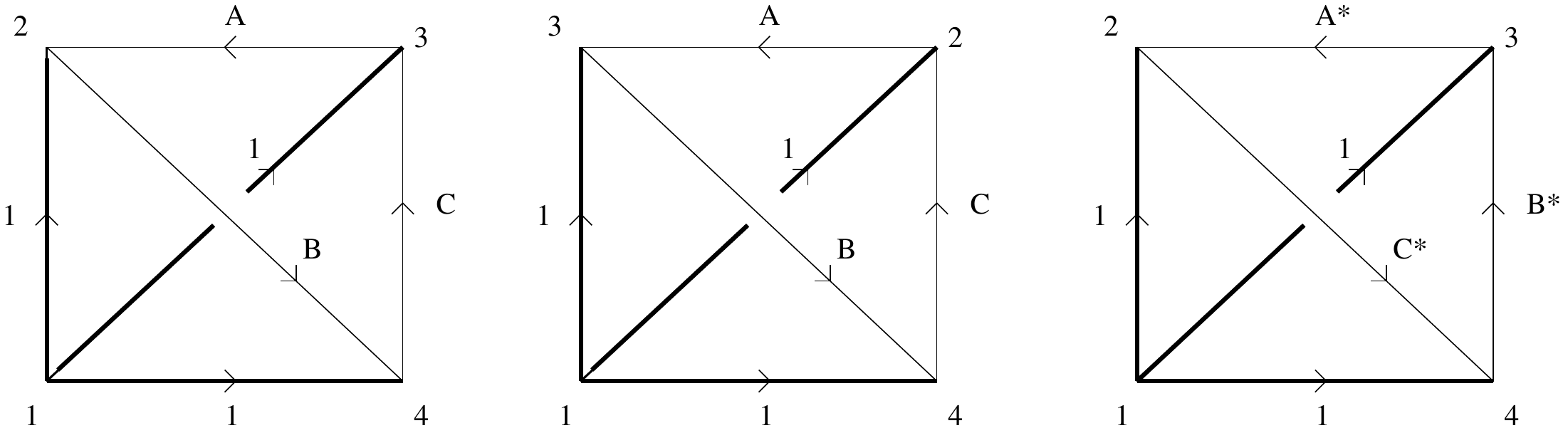}
\caption{Calculation of the action of $(23)$ on $T^3$. The original graph,
the pushed forward order and the move into the old position to read off the morphism}
\label{t23fig}
\end{figure}
A similar calculation shows that $\Psi((34))(A,B,C)=(B^*,A^*,C^*)$.
A consequence is that the cycle $(234)=(23)(34)$ acts as
$\Psi((234))(A,B,C)=\Psi((23))(B^*,A^*,C^*)=(B,C,A)$ and is the cyclic permutation.

The action of $(12)$ is more complicated as the root is moved.
For this we calculate graphically, see Figure \ref{t12fig}, and read off $\Psi$ as:
$(A,B,C)\mapsto (A^*,B^*,ACB)$.

\begin{figure}
\includegraphics[width=\textwidth]{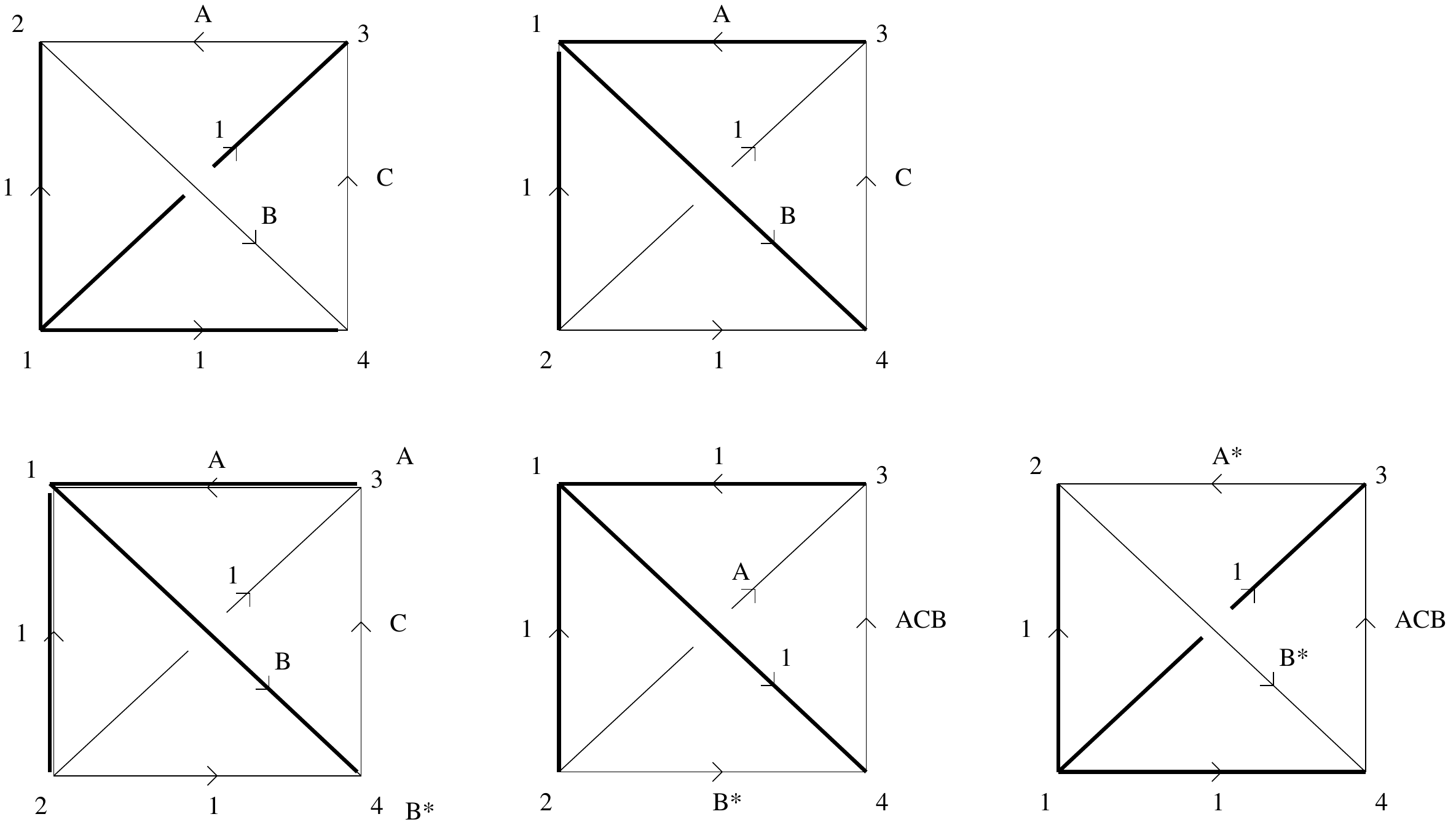}
\caption{Calculation of the action of $(12)$ on $T^3$}
\label{t12fig}
\end{figure}

This allows us to compute fixed points and stabilizer groups.
We will first concentrate on non--Abelian stabilizer groups.
There are only two fixed points under the full $\SS_4$ action and these
are $(1,1,1)$ and $(-1,-1,-1)$.
The group $A_4$, the subgroup of
all even permutations, is the stabilizer
group of the two points $(i,i,i)$ and $(-i,-i,-i)$.
One can readily check that these are the only non--Abelian stabilizer groups. The other possibility
would be $\SS_3$, but a short calculation shows that anything that is stabilized by any $\SS_3$ subgroup is stabilized by all of $\SS_4$.

\subsubsection{Representations}
We collect together the matrices $M$ needed for further calculation. Again,
we fix our initial ordered rooted spanning tree as before.

Using short hand notation,
the matrix for the re--gauging induced by the transpositions $(12),(13),(14)$ from the initial spanning tree to the pushed forward
 one are
 $$\rho_{12}=\left(\begin{matrix}0&1&&\\1&0&&\\&&A&\\&&&B^*\end{matrix}\right), \rho_{13}=\left(\begin{matrix}0&&1&\\&A^*&&\\1&&0&\\&&&C\end{matrix}\right),\rho_{14}=\left(\begin{matrix}0&&&1\\&B&0&\\&0&C^*&\\1&&&0\end{matrix}\right)$$
The calculation for $\rho_{12}$ can be read off from Figure \ref{t12fig}.
For this we read off the matrix $\Phi$ from the re--gauging parameter and
the matrix $M_{\sigma}$ is given by the permutation we are considering.
The other calculations are similar.
All other transpositions, viz.\ those not involving $1$, 
simply yield permutation matrices as there is no re--gauging involved.
It is convenient to also have the following matrices as a reference:
$$\rho_{(12)(34)}=\left(\begin{matrix}0&1&&\\1&0&&\\&&0&A\\&&B^*&0\end{matrix}\right), \quad \rho_{(14)(23)}=\left(\begin{matrix}&&&1\\&&B&\\&C^*&&\\1&&&\end{matrix}\right)
$$
and finally
$$
\rho_{(123)}=\left(\begin{matrix}0&0&1&\\1&0&0&\\0&A&0&\\&&&B^*\end{matrix}\right)
$$
\subsubsection{The point $(0,0,0)$}
At $(0,0,0)$, the matrices $\rho_{12},\rho_{23},\rho_{34}$ give the usual representation of $\SS_4$ on $\C^4$. As is well known this representation decomposes  into the trivial representation and an irreducible 3--dim representation. This means that there is an at least 3--fold degenerate Eigenvalue $\lambda$. Since the trace of $H$ is zero, we also know that the Eigenvalues satisfy $\mu=-3\lambda$. Plugging in $(1,1,1,1)$, which spans the trivial representation, we see that $\mu=3$ and $\lambda=-1$.

\subsubsection{The point $(\pi,\pi,\pi)$}
In this case, the matrices $\rho_{12},\rho_{23},\rho_{34}$ only give a projective representation.
As one can check  $\rho_{12}\rho_{23}\rho_{12}=-\rho_{13}$ while
$\rho_{23}\rho_{12}\rho_{23}=\rho_{13}$ for instance.
Define the 1--cocycle $\lambda$ by $\lambda(\sigma)=(-1)$ if $1$ appears in a cycle of length $>1$
and $1$ else.
So that $\lambda((12))=\lambda((13))=\lambda((123))=-1$ while $\lambda((23))=\lambda((24))=\lambda((234))=1$.
Then one calculates that $\tilde \rho:=\rho\circ \lambda$ has a trivial cocycle $c$ and thus
$\rho$ is isomorphic to a true linear representation of $\SS_4$. Checking
the characters, one sees again that in this case the irreducible
components of $\tilde\rho$, which also commute with $H$ are again the one--dimensional trivial
representation and the 3--dimensional standard representation. The trivial representation is spanned by $(-1,1,1,1)$.
 The Eigenvalues are then readily computed to be
$1$ with multiplicity $3$ and $-3$ with multiplicity $1$.

\begin{rmk}
We would like to remark that the choice of $\lambda$ amounts 
to choosing a different gauge
for the root vertex, namely $-1$ instead of $1$.
\end{rmk}

\begin{rmk}
Notice that already in this case, even though there is no projective extension, 
our enhanced gauge--group is necessary. 
Without it there would only be an $\SS_3$ action, those elements
which involve no re--gauging. This smaller symmetry group is, however, not powerful
enough to force the triple degeneracy, as 
$\SS_3$ has no irreducible $3$--dim representation.
\end{rmk}

\subsubsection{The point $(\pi/2,\pi/2,\pi/2)$ and $(-\pi/2,-\pi/2,-\pi/2)$}
These points are similar to each other. We will treat the first one in detail.
Again, we have only a projective representation of $A_4$ aka.\ the tetrahedral group $T$.
Namely, $\rho_{(12)(23)}\rho_{(13)(24)}=-i\rho_{(14)(23)}$. Again we can scale
by a 1--cocycle $\lambda$. This time $\lambda(id)=1$,
$\lambda((ij)(kl))=i$, $\lambda(ijk)=1$ if $1\notin \{i,j,k\}$,
and $\lambda((ijk))=i$ if $1\in \{i,j,k\}$.
The resulting representation $\tilde\rho=\rho\circ\lambda$ is then still a projective
representation, but is it a representation of the unique non--trivial $\Z/2\Z$ extension of
$A_4$, which goes by the names $2T,2A_4,SL(2,3)$ or the binary tetrahedral group.
This group is well known. It is presented by generators $s$ and $t$ with the relations $s^3=t^3=(st)^2$.
In $SL(2,3)$ (that is the special linear group of $2\times2$ matrices over the field with three elements ${\mathbb F}_3$), one can choose $s=\left(\begin{matrix}-1&-1\\0&-1\end{matrix}\right)$
and $t=\left(\begin{matrix}-1&0\\-1&-1\end{matrix}\right)$. 

For $2A_4$ using a set theoretic section $\wedge$
of the extension sequence
\begin{equation}
\xymatrix{
1\ar[r]&\Z/2\Z\ar[r]&2A_4\ar@<1ex>[r]&A_4\ar@<1ex>[l]^{\wedge}\ar[r]&1
}
\end{equation}
and $z$ as a generator for $\Z/2\Z$,  we can 
pick $s=z\widehat{(123)},t=z\widehat{(234)}$ 
as generators. 
Now we can check the character table, Table
\ref{chartable},  and find that the representation $\tilde \rho$
over the complex numbers decomposes as the sum of two irreducible two--dimensional representations $\chi_5\oplus \chi_6$.
In fact, these are the two representations into which the 
unique real irreducible 4--dimensional representation of complex type
splits over $\C$.

The explicit computation for the representation 
\begin{eqnarray}
\tilde\rho(s)&=&-\lambda((123))\rho_{(123)}=
\left(\begin{matrix}0&0&-i&0\\-i&0&0&0\\0&1&0&0\\0&0&0&-1\end{matrix}\right)\nn\\
\tilde\rho(t)&=& -  \lambda((234))\rho_{(234)}=
-\left(\begin{matrix}
1&0&0&0\\0&0&0&1\\0&1&0&0\\0&0&1&0
\end{matrix}
\right)
\end{eqnarray}
is as follows. Suppose the $\tilde\rho=\bigoplus_{i=1}^7a_i\rho_i$, where
$\rho_i$ is the irrep with character $\chi_i$. 
Now $tr(id)=4,tr(-1)=-4$ , using 
the character table this implies that the coefficients $a_1=a_2=a_3=a_7=0$
and furthermore $(*)\,  a_4+a_5+a_6=2$. We furthermore have that $tr(s)=-1$
so that $a_4+\omega a_5 +\omega^2 a_6=-1$ which together with (*) implies
that $a_4=0, a_5=a_6=1$. This fixes the decomposition into irreps.
As a double check one can verify 
that the rest of the equations are also satisfied.

So indeed we find that $(\pi/2,\pi/2,\pi/2)$ is a point with two Eigenvalues with degeneracy $2$.
It is not hard to find (e.g.\ using the results of \S\ref{superpar}) that these Eigenvalues
are $\pm\sqrt{3}$.

\begin{table}

\begin{center}
\begin{tabular}{|c|c|c|c|c|c|c|c|}
\hline
Representative&$1$&$-1$&$s^3$&$t^2$&$s^2$&$t$&$s$\\
\hline
Elts in Conj. Class & $1$ & $1$ & $6$ & $4$ & $4$ & $4$ & $4$ \\
\hline
Order & $1$ & $2$ & $3$ & $3$ & $4$ & $4$ & $6$ \\
\hline
\hline
$\chi_1$ & $1$& $1$ & $1$ & $1$ & $1$ & $1$ & $1$\\
\hline
$\chi_2$ & $1$& $1$ & $1$ &$\omega$ & $\omega^2$ & $\omega^2$ & $\omega$\\
\hline 
$\chi_3$ & $1$& $1$ & $1$& $\omega^2$ & $\omega$ & $\omega$ & $\omega^2$ \\
\hline
$\chi_4$ & $2$& $-2$ &$0$ & $-1$ & $-1$ & $1$ & $1$ \\
\hline
$\chi_5$ & $2$& $-2$ & $0$&$-\omega$ & $-\omega^2$ & $\omega^2$ & $\omega$\\
\hline
$\chi_6$ & $2$& $-2$ &$0$& $-\omega^2$ & $-\omega$ & $\omega$ & $\omega^2$ \\
\hline
$\chi_7$ & $3$& $3$ & $-1$& $0$ & $0$ & $0$ & $0$\\
\hline
\end{tabular}
\caption{\label{chartable} Character table of $2\cdot A_4$\cite{Schur07}, where $\omega = e^{\frac{2 \pi i}{3}}$.}
\end{center}

\end{table}

The analysis of the complex conjugate point $(-\pi/2,-\pi/2,-\pi/2)$ is analogous.

We would briefly like to connect these results to \cite{kkwkdirac}.
There it was shown that these four points are the only singular points in the spectrum
and that the two double crossing points are Dirac points.

 \begin{table}

 \begin{tabular}{llllll}
 $a,b,c$&Group&Iso class of&type&Dim of &Eigenvalues $\lambda$\\
 &&of extension&&Irreps&\\
 \hline
 $(0,0,0)$&$\SS_4$&$\SS_4$&trivial&1,3&$\lambda=-1$ three times\\
 &&&&&$\lambda=3$ once\\
 \hline
 $(\pi,\pi,\pi)$&$\SS_4$& $\SS_4$ &trivializable &1,3&$\lambda=1$ three times\\
 &&&\phantom{$\SS_4$} cocycle&&$\lambda=-3$ once\\
 \hline
 $(\frac{\pi}{2},\frac{\pi}{2},\frac{\pi}{2})$& $A_4$&$2A_4$& isomorphic&2,2&$\lambda=\pm\sqrt{3}$ twice each\\
 $(\frac{3 \pi}{2},\frac{3 \pi}{2},\frac{3 \pi}{2})$&&\phantom{$1A_4$} &extension&\\
 \hline
 \end{tabular}
  \caption{\label{gyroidtable}
 Possible choices of parameters $(a,b,c)$ leading to non--Abelian enhanced symmetry groups and degenerate eigenvalues of $H$}
 \end{table}

\subsubsection{Super--Selection Rules and Spectrum Along the Diagonal}
\label{superpar}
To illustrate the power of the super--selection rules we consider the action of the cyclic group generated by $(234)$. One can easily see that the fixed point set in the $T^3$ is the diagonal $t=(a,a,a)$. The matrices $\hat\chi\rho$ actually give a {\it bona fide} representation of $\C^4$. This is the
representation of $C_3$ given by cyclicly permuting
the last three factors of $\C$.
The action decomposes into irreps as follows: ${\mathbb C}^4=triv\oplus triv\oplus \omega\oplus \bar\omega$. Where $\omega$ is the 1--dimensional representation given by $\rho((123))=\omega=exp(2\pi i /3)$.
The two trivial representations are spanned by $v_1=(1,0,0,0)$ and $v_2=(0,1,1,1)$ while
the representation $\omega$ is spanned by $w=(0,1,\omega,\bar\omega)$ and $\bar\omega$ by $\bar w$.

 Although we cannot extract information about the degeneracies from this it  helps greatly in determining the Eigenvalues, since there are two irreps with multiplicity one each giving a unique 1--dimensional  Eigenspace for
 the Hamiltonian. Hence we immediately get two Eigenvalues.
Plugging $w$ and $\bar w$ into $H(t)$ one reads off
\begin{equation}\lambda_1=\omega \, exp(ia)+\bar{\omega} \exp(-ia)\quad \lambda_2=\bar{\omega} \exp(ia)+\omega \exp(-ia)
\end{equation}

 The sum of the two trivial representations gives a 2--dimensional isotypical component. Therefore, we have to diagonalize $H$ inside this
 Eigenspace. It is interesting to note
that at the special points it is exactly this flexibility that is needed in order to allow for crossings.

To determine the two remaining eigenvalues $\lambda_3$ and $\lambda_4$,
we apply $H$ to $\vec{v}=xv_1+yv_2=(x,y,y,y)$. The eigenvalue equation
$H\vec{v} =\lambda \vec{v}$ leads to the equations
$ 3y=\lambda x$ and  $ x+y (\exp(ia)+\exp(-ia))=\lambda y$,
Fixing $x=3$ this gives the quadratic equation
$\lambda^2-2\cos(a)\lambda-3=0$
 which has the two solutions
 \begin{equation}
 \label{evaleq}
\lambda_{3,4}=\cos(a)\pm \sqrt{\cos^2(a)+3}
\end{equation}

This gives the spectrum along the diagonal which is given in  Figure \ref{dispersion}. The calculation only involves
the classical symmetries without re--gauging.

 \begin{figure}
\includegraphics[width=.6\textwidth]{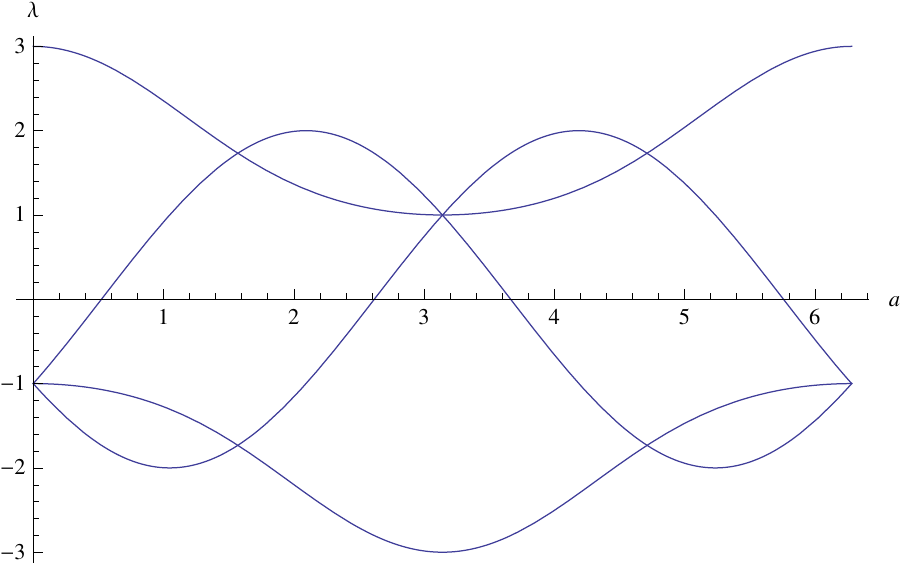}
\caption {Spectrum of $H$ along the diagonal in $T^3$}
 \label{dispersion}
\end{figure}

In \cite{Avron} the authors also assert that numerically they only found
singular points in the spectrum along the diagonal.
The fact that the arising candidates for Dirac points are indeed such points
and the analytic proof that indeed there are no other singular points
in the spectrum is contained in \cite{kkwkdirac}.

\subsection{The P case.} There is nothing much to say here. There is only the root of the spanning tree which is unique. The $\SS_3$ action permutes
the edges and their weights. This yields the permutation action on the $T^3$. 
There is no nontrivial cover and the Eigenvalues remain invariant.

\subsection{The D case.} Here things again become interesting. 
Permuting the two vertices, we obtain eight fixed points 
if $a,b,c\in \{1,-1\}$. The matrix for this transposition is $\left(\begin{matrix}0&1\\1&0\end{matrix}\right)$. This gives super--selection rules and we know
that $v_1=(1,1)$ and $v_2=(-1,1)$ are Eigenvectors. The Eigenvalues being $1+a+b+c$ and $-(1+a+b+c)$ at these eight points.

We can also permute the edges with the $\SS_4$ action. In this case the $\SS_3$ action
leaving the spanning tree edge invariant acts as a permutation on $(a,b,c)$. The relevant
matrices however are just the identity matrices and the representation is trivial.
The transposition $(12)$, however,  results in the action $
(a,b,c)\mapsto (\bar a,\bar a b,\bar ac)$ on $T^3$, see Figure \ref{t12fig}.
So to be invariant we have $a=1$, but this implies that $\rho_{12}$ is the identity matrix.
Invariance for $(13)$ and $(14)$ and the three cycles containing $1$ are  similar. 
But, if we look at invariance under the element $(12)(34)$ we are lead to the equations
$$
a=\bar a, b=\bar a c,c=\bar a b
$$
This has solutions $a=1,b=c$, for these fixed points again we find only a trivial action.
But
for $a=-1,b=-c$ these give rise to the diagonal matrix $diag(1,-1)$ and hence Eigenvectors $e_1=(1,0)$ and $e_2=(0,1)$, but looking at the Hamiltonian,
these are only Eigenvectors if it is the zero matrix $H(a,b,c)=0$. 
Indeed the conditions above imply $1+a+b+c=0$. 

Similarly, we find a $\Z/2\Z$ group for $(13)(24)$
and $(14)(23)$ yielding the symmetric equations $b=-1$, $a=-c$ and $c=-1,a=-b$.
These are exactly the three circles found in \cite{kkwk2}.
Going to bigger subgroups we only get something interesting if the stabilizer group
$G_t$ contains precisely two of the double transposition above. That is the Klein four group $\Z/2\Z\times
\Z/2\Z$. The invariants  are precisely the intersection points of the three circles
given by $a=b=-1$ and $c=1$ and its cyclic permutations.

These are three lines of double degenerate Eigenvalue $0$. These are not Dirac points
since there is one free parameter and hence the fibers of the characteristic
 map of \cite{kkwkdirac} are one dimensional which implies that the singular point is not isolated.

\subsection{The honeycomb case}
This is very similar to the D story. The vertex interchange renders the 
fixed points $a=\pm1,b=\pm1$ which have Eigenvectors $v_1,v_2$ as above and Eigenvalues
 $1+a+b$ and $-(1+a+b)$ respectively. The irreps of the $C_3$ action are $triv\oplus \omega$.

As far as the edge permutations are concerned the interesting one is 
 the cyclic permutation $(123)$ which yields the equations
$$
a=\bar b, b=\bar b a
$$
for fixed points. Hence $a^3=1$. We get non--trivial 
 matrices at the points $(\omega,\bar\omega)$ and $(\bar\omega,\omega)$. At these points $e_1,e_2$ are Eigenvectors with Eigenvalue $0$ and $H=0$,
since $1+a+b=1+\omega+\bar\omega=0$.
They are exactly the Dirac points of graphene.

\section{Conclusion}
By considering re--gaugings, we have found the symmetry groups
fixing the degeneracies of the PDG and honeycomb families of graph
Hamiltonians.
The symmetries we used were those induced by the automorphisms of 
the underlying
graphs. In our specific examples, all the graphs were highly symmetric, and hence had large  automorphism 
group.
Here we stress that our symmetries are extended
symmetries and not just the classical ones. The most instructive and
interesting case is the action of the binary tetrahedral group giving
rise to the Dirac points in the Gyroid network. Note that as dimension--$0$ objects,
the Dirac 
points for  the Gyroid are codimension--3 defects in $T^3$,
rather than codimension-2 defects in
$T^2$ as
for the honeycomb lattice, which describes graphene. 
Nevertheless, one may expect that they too lead to special physical 
properties.

There are several questions and research directions 
that tie into the present analysis.

It would be interesting to find concrete examples of lifts of re--gaugings
either in the noncommutative case or in the case of re--gaugings not
induced by graph symmetries. One place where we intend to look for the former
 is in the noncommutative case of PDG and the honeycomb as we aim to probe
the noncommutative/commutative symmetry mentioned in  Remark \S\ref{symrmk}.

We are furthermore interested in how these symmetries behave under deformations
of the Hamiltonian and if they are topologically stable. A physically important 
type of deformations are those corresponding to periodic (in space) 
lattice distortions that describe crystals with lower spatial symmetry than those
considered here. Such distortions may occur for instance during
synthesis of the structure \cite{Hillhouse}.
Codimension-3 Dirac points, such as those of the Gyroid network,
are especially interesting in this respect: they can
be viewed as magnetic monopoles in the parameter 
space \cite{Berry} and as such are expected 
to be topologically stable. This makes the physics associated with such points immune
to periodic lattice distortions.

Finally it seems that on the horizon there are connections between our
theory and two other worlds. The first being quiver representations in 
general and the second  being cluster algebras. The connection to the
first is inherent in the subject matter, while the connection 
to the second needs some work. The point is that in our transformations,
we change several variables at a time. Nevertheless, the re--gauging groupoid
can be viewed as a sort of mutation diagram. We plan to investigate
these intriguing connections in the future.

\section*{Acknowledgments}
RK thankfully acknowledges
support from NSF DMS-0805881 and DMS-1007846.
BK  thankfully acknowledges support from the  NSF under the grants PHY-0969689 and PHY-1255409.
 Any opinions, findings and conclusions or
recommendations expressed in this
 material are those of the authors and do not necessarily
reflect the views of the National Science Foundation.
Both RK and BK thank the Simons Foundation for support.

Parts of this work were completed when RK was visiting, 
the IHES in Bures--sur--Yvette, the Max--Planck--Institute in Bonn and the University of Hamburg with a Humboldt fellowship. He gratefully acknowledges
their support. BK acknowledges the hospitality of the DESY theory group where finishing touches for this article were made. 
Both RK and BK thank the Institute for Advanced Study where this version was written

We also wish to thank Sergey Fomin for a short but very valuable discussion.

\end{document}